\documentclass[11pt]{amsart}  
\usepackage[left=1.3in,right=1.3in,bottom=1.3in]{geometry}  
\usepackage[T1]{fontenc}

\usepackage{amssymb,amsmath,amscd}

\usepackage[parfill]{parskip}

\usepackage{appendix}
\usepackage{upgreek}
\usepackage{mathrsfs}
\usepackage{latexsym}

\usepackage{enumitem}
\setlist{itemsep=0.2em, parsep=0.2em} 
\setlist[itemize,1]{label=\ensuremath{\blacktriangleright}}
\setlist[itemize,2]{label=\ensuremath{\triangleright}}

\usepackage{graphicx} 
\usepackage{tikz-cd}
\usepackage{stmaryrd}

\theoremstyle{plain}
\newtheorem{theorem}{Theorem}[section]
\newtheorem{lemma}[theorem]{Lemma}
\newtheorem{proposition}[theorem]{Proposition}
\newtheorem{corollary}[theorem]{Corollary}

\theoremstyle{definition}
\newtheorem{definition}[theorem]{Definition}
\newtheorem{example}[theorem]{Example}
\newtheorem{remark}[theorem]{Remark}

\newcommand{\id}{\textsc{id}} 
\newcommand{\Fix}{\textsc{Fix}} 
\newcommand{\inagg}{\alpha} 
\newcommand{\distrib}{\delta} 
\newcommand{\outagg}{\omega} 
\newcommand{\liab}{\ell} 

\newcommand{\hyper}{X} 
\newcommand{\source}[1]{s(#1)} 
\newcommand{\target}[1]{t(#1)} 
\newcommand{\sheaf}[1]{\mathcal{#1}} 
\newcommand{\stalk}[2]{{#1}(#2)} 
\newcommand{\Inc}{\mathcal{I}}
\DeclareMathOperator{\Hom}{Hom}
\DeclareMathOperator{\Eq}{Eq}

\newcommand{\hypedg}{h} 
\newcommand{\dis}{\mathrm{dis}}   
\newcommand{\col}{\mathrm{col}}   
\newcommand{\datacat}{\mathbf{C}} 
\newcommand{\payob}{P} 
\newcommand{\Phidual}{\Phi^\sharp}  
\newcommand{\Phidualast}{\Phi^\sharp_\ast}  
\newcommand{\style}[1]{\emph{#1}}

\begin{document}

\title{Clearing in Liability Networks via Sheaves on Directed Hypergraphs}
\author{Robert Ghrist}
\address{Department of Mathematics and Electrical \& Systems Engineering \\
         University of Pennsylvania \\
         Philadelphia, PA}
\begin{abstract}
We associate to a decorated liability network a \emph{liability sheaf} on a directed hypergraph whose hyperedges separate the distribution of payments from the collection of receipts. Clearing configurations are precisely the global sections of this sheaf, and the global-section object is canonically the equalizer of the identity and a clearing operator $\Phi=A\circ D$ factored into collective distribution $D$ and aggregation $A$; an \emph{institution-edge duality} identifies it equivalently with the equalizer of the dual operator $D\circ A$ on the edge side. This identifies liability clearing as a finite-limit construction in the ambient data category. The construction is functorial under change of coefficient category: a \emph{Clearing Invariance Theorem} shows that a finite-limit-preserving functor compatible with constraint subobjects induces a canonical isomorphism on global-section objects, enabling uniform comparison of clearing problems across categories of payment data. Existence, uniqueness, and iterative computation of clearing sections are organized by the structure carried on payment objects: Tarski's theorem yields existence and a complete-lattice structure under complete-lattice global elements; Scott continuity refines this to convergent Kleene iteration; an acyclic underlying graph admits a unique clearing section in finitely many steps with no order or metric hypothesis; and Banach's theorem on global elements yields uniqueness under metric contraction. The Eisenberg--Noe model and lattice liability networks arise as special cases.
\end{abstract}

\subjclass[2020]{Primary 91G45; Secondary 91G40, 06B23, 18A30, 18F20}
\keywords{Financial networks, liability clearing, cellular sheaves, directed hypergraphs, complete lattices, Tarski fixed point theorem, clearing payments}

\maketitle

\section{Introduction}
\label{sec:intro}

The Eisenberg-Noe model \cite{EisenbergNoe2001} established the mathematical foundations for analyzing financial clearing through fixed points of payment flows on directed networks. There followed numerous extensions and reformulations: see Section \ref{subsec:related}. Recent work \cite{GhristGouldLopezRiess2025} generalized this framework to lattice liability networks (LLNs), where payments take values in complete lattices rather than real numbers, and clearing configurations satisfy local balance conditions reminiscent of global sections of a sheaf, though the relationship between local constraints and global clearing remained implicit.

This paper recasts liability clearing as a problem in sheaf theory. We construct liability sheaves on directed hypergraphs where clearing configurations correspond precisely to global sections. The hypergraph structure separates the concerns of resource distribution and collection: distribution hyperedges model how institutions allocate payments among creditors, while collection hyperedges aggregate receipts from debtors. Individual liabilities remain edge-indexed as in classical models, but the hypergraph topology clarifies how these obligations compose into network-wide payment flows.

The sheaf framework requires working in a liability category $\datacat$ satisfying specific conditions: finite limits, ordered global elements, distinguished constraint subobjects, and a bound selector relating liabilities to constraints. When $\datacat$ is the category of partially ordered sets, we recover both Eisenberg-Noe and LLN models as special cases. The choice of coefficient category determines the structural form of clearing; additional order or metric structure on the set of global elements determines the applicable fixed-point principle (Tarski's theorem for complete-lattice global elements; Banach's theorem when the global elements of the total payment object carry a complete metric and the clearing operator contracts).

The remainder of the introduction surveys the related literature (\S\ref{subsec:related}), records the categorical and order-theoretic conventions in force throughout (\S\ref{subsec:math}), and summarizes the contributions (\S\ref{subsec:summary}).

\subsection{Related Work}
\label{subsec:related}

The modern theory of network clearing originates with Eisenberg and Noe, who cast clearing as a monotone self-map on a product lattice of payment intervals and invoke Tarski's theorem for existence of a greatest clearing vector \cite{EisenbergNoe2001}. The essential ingredient is monotonicity, not linearity or convexity. The lattice structure and uniqueness of clearing payment matrices under general bankruptcy rules are studied in \cite{CsokaHerings2024}. Order-theoretic extensions retain this fixed-point core while adding financial realism: Rogers--Veraart incorporate default costs preserving order-theoretic existence \cite{RogersVeraart2013}; Elsinger treats cross-holdings coupling equity and debt channels \cite{Elsinger2009}; Elliott, Golub, and Jackson analyze contagion under feedback through liabilities and cross-exposures \cite{ElliottGolubJackson2014}. The \style{lattice liability network} framework of \cite{GhristGouldLopezRiess2025} generalizes the product-lattice setting to arbitrary complete-lattice vertex state spaces and separates pay-in aggregation from pay-out distribution via monotone operators; this is the immediate antecedent of the present work, and we recover it as a special case.

Sheaf-theoretic perspectives on networks provide a local-to-global language for consistency. Cellular and network sheaves with their Laplacians supply tools for acyclic flows \cite{HansenGhrist2019,HansenGhrist2021}; quiver-based operators adapt these to directed data \cite{SumrayHarringtonNanda2024}. Duta, Cassar\`a, Silvestri, and Li\`o introduce sheaf hypergraph Laplacians \cite{DutaCassaraSilvestriLio2023}, and a directional variant with corrected spectral properties appears in \cite{DirectionalSHN2025}. These constructions target diffusion and learning rather than clearing, and cyclic liabilities undermine Laplacian descent in either setting; the present paper takes an order-theoretic (rather than variational) route.

Further neighboring lines -- dynamic, multi-period, and optimized clearing \cite{CapponiChen2015,BanerjeeBernsteinFeinstein2018,BanerjeeFeinstein2019,BarrattBoyd2020,CalafioreEtAl2023,CalafioreFracastoroProskurnikov2024,Feinstein2019,KusnetsovVeraart2019}, information incompleteness and network reconstruction \cite{AnandCraigVonPeter2015,GandyVeraart2017}, and algorithmic complexity \cite{BestingHoeferHuth2026,SchuldenzuckerSeukenBattiston2017} -- remain complementary rather than subsumed. Broader systemic-risk and amplification questions appear in \cite{AcemogluOzdaglarTahbazSalehi2015,GlassermanYoung2015,GaiHaldaneKapadia2011}. This paper sharpens the sheaf-theoretic direction: clearing becomes a finite-limit construction, functorial in the coefficient category.

\subsection{Mathematical tools}
\label{subsec:math}

Throughout, $\datacat$ denotes a \style{liability category} (Definition~\ref{def:liability-category}): a category with finite limits, ordered global elements, a pullback-stable class of constraint subobjects, and a bound selector $\beta$. The canonical working examples are $\mathbf{Pos}$ and $\mathbf{DCPO}$, with $\mathbf{Set}_{\mathrm{disc}}$ as a degenerate (order-trivial) reference case. Directed hypergraphs $\hyper$ are combinatorial structures generalizing graphs so that a hyperedge may connect arbitrary sets of vertices on each side; a sheaf on $\hyper$ is a functor from the incidence category $\Inc(\hyper)$ to $\datacat$. Throughout, we write $H^0(\hyper;\sheaf{L}) = \lim_{\Inc(\hyper)}\sheaf{L}$ for the \emph{global-section object} in $\datacat$, and $\Gamma(\hyper;\sheaf{L})=\Hom_\datacat(1,H^0(\hyper;\sheaf{L}))$ for the \emph{set of global sections}; the cohomological notation $H^0$ is intentional, anticipating that in coefficient categories carrying suitable additional structure, 
relative $H^0$ and $H^1$ will serve as obstruction classes in forthcoming work. For background on categories and limits see \cite{MacLane1998,Leinster2014}; on hypergraphs \cite{Bretto2013}; on network sheaves \cite{Curry2014,HansenGhrist2019}; on Tarski's theorem and complete lattices \cite{Tarski1955,DaveyPriestley2002}.
For recent discussion of sheaves on arbitrary finite posets and of the limitations of ad hoc hypergraph sheaf constructions, see \cite{AyzenbergGebhartMagaiSolomadin2025}.

\subsection{Summary of Contributions}
\label{subsec:summary}

We reformulate liability networks as sheaves on directed hypergraphs in which clearing configurations are global sections. Building on the lattice liability networks of \cite{GhristGouldLopezRiess2025}, which introduced section-like clearing conditions without sheaf structure, we provide a categorical framework that both encompasses and sharpens the previous formulation.

\textbf{Sheaf-Theoretic Clearing.} We construct a \style{liability sheaf} on a directed hypergraph that separates institution vertices (carrying payment states) from payment vertices (enforcing liability bounds), with distribution and collection hyperedges encoding the two sides of resource flow. Clearing configurations correspond precisely to global sections (Theorem~\ref{thm:clearing-sections}).

\textbf{Equalizer Presentation.} The global-section object is canonically the equalizer of the identity and a factored clearing operator $\Phi=A\circ D$ in the ambient liability category (Proposition~\ref{prop:limit-equalizer}). This identifies clearing as a finite-limit construction and makes the flow structure -- collective distribution $D$ followed by aggregation $A$ -- explicit at the object level, not merely at the level of fixed points.

\textbf{Clearing Invariance.} The construction is functorial under change of coefficient category: a finite-limit-preserving functor compatible with constraint subobjects induces a canonical isomorphism on global-section objects (Theorem~\ref{thm:clearing-invariance}). 

\textbf{Fixed-Point Consequences.} When payment objects have complete-lattice global elements, Tarski's theorem ensures that clearing sections exist and themselves form a complete lattice (Theorem~\ref{thm:lattice-existence}). When global elements carry a complete metric under which the clearing operator contracts, Banach's theorem yields a unique clearing section computable by iteration (Theorem~\ref{thm:banach-global}). Metric structure is imposed on global elements, not on the ambient liability category; this cleanly separates the categorical and analytic layers.

\textbf{Classical Recovery.} The Eisenberg-Noe model and lattice liability networks arise as special cases, obtained by choosing the coefficient category appropriately and specializing the distributor and aggregator morphisms (Sections~\ref{ssec:E-N} and~\ref{ssec:LLNs}).

\section{A Sheaf Model for Liabilities}
\label{sec:sheafmodel}

We begin by formalizing liability networks as decorated directed graphs, then show why this structure alone is insufficient for a sheaf-theoretic treatment. This motivates our construction of liability sheaves on directed hypergraphs, where clearing configurations emerge naturally as global sections.

\subsection{Liability Networks}
\label{ssec:liability-graphs}

A liability network begins with a directed graph whose vertices represent institutions and whose edges represent obligations. To model the diverse nature of resources and constraints in real-world networks, we decorate this graph with data from an appropriate category.

\begin{definition}[Liability Network]
\label{def:liability-graph}
Let $\datacat$ be a \style{liability category} in the sense of Definition~\ref{def:liability-category} below, with terminal object $1$ and bound selector $\beta$ assigning to each global element $\rho:1\to P$ a constraint subobject $\beta_P(\rho)\hookrightarrow P$. A \style{liability network} in $\datacat$ is a tuple $(G, \payob, \liab, \iota; \dis, \widehat\col)$ where $G = (V, E, s, t)$ is a directed graph with vertex set $V$, edge set $E$, and source/target maps $s, t: E \to V$. The decoration consists of:
\begin{itemize}
\item A collection $\payob = \{\payob_v \in \datacat : v \in V\}$ of \style{payment objects}
\item A collection $\liab = \{\liab_e: 1 \to \payob_{s(e)} : e \in E\}$ of \style{liability morphisms}
\item A collection $\iota = \{\iota_v: 1 \to \payob_v : v \in V\}$ of \style{exogenous resource morphisms}, selecting external resources at each institution
\item A collection $\dis = \{\dis_e: \payob_{s(e)} \to \beta_{\payob_{s(e)}}(\liab_e) : e \in E\}$ of \style{distributor morphisms}
\item A collection $\widehat\col = \{\widehat\col_v: \payob_v\times\prod_{t(e) = v} \beta_{\payob_{s(e)}}(\liab_e) \to \payob_v : v \in V\}$ of \style{aggregator morphisms}, taking exogenous resources and incoming payments as separate arguments
\end{itemize}
The \style{partial aggregator} used in the sheaf construction is the morphism
\[
\col_v\ :=\ \widehat\col_v\circ(\iota_v\times\id):\ \prod_{t(e)=v}\beta_{\payob_{s(e)}}(\liab_e)\ \longrightarrow\ \payob_v,
\]
obtained from $\widehat\col_v$ by partial evaluation along $\iota_v$ (using the canonical iso $\prod\beta\cong 1\times\prod\beta$): the exogenous resource is the structural point at which the binary aggregator $\widehat\col_v$ is curried into the unary morphism that appears in the sheaf.  When the data is clear from context, we abbreviate the liability network as $(G,\payob,\liab,\iota)$.  The forward reference to Definition~\ref{def:liability-category} is deliberate: the liability-category axioms isolate the minimal structure on $\datacat$ that makes the constraint subobjects $\beta_{\payob_{s(e)}}(\liab_e)$ canonically defined and the resulting clearing problem categorically well-posed.
\end{definition}

The payment object $\payob_v$ represents the space of possible payment states at vertex $v$, while the morphism $\liab_e$ encodes the nominal liability along edge $e$, and $\iota_v$ encodes the external resources available to vertex $v$. The morphisms from the terminal object provide a categorical mechanism for specifying particular elements or constraints within the payment objects. When $\datacat$ is concrete, these morphisms pick out specific values that represent liabilities and resources. This structure will be transformed into a sheaf on a hypergraph in Section~\ref{ssec:formal-construction}, where the liability morphisms will determine constraint subobjects at payment vertices.

The liability morphism $\liab_e: 1 \to \payob_{s(e)}$ specifies the obligation in the payment space of the debtor (source vertex). This modeling choice reflects that institutions define their obligations relative to their own capacity and payment structure.

\begin{remark}[Source-Typing and the Edge-Typed Generalization]
\label{rem:source-typing}
The convention that liabilities are valued in the debtor's payment object, $\liab_e:1\to\payob_{s(e)}$, with constraint subobject $\payob_{s(e)}^{\lambda_e}\subseteq\payob_{s(e)}$ a subobject of the source's payment object, suffices for single-currency scalar clearing networks -- Eisenberg--Noe \cite{EisenbergNoe2001} and the lattice liability networks of \cite{GhristGouldLopezRiess2025} -- in which every obligation is naturally typed by the debtor's payment capacity. An \style{edge-typed} generalization replaces $\payob_{s(e)}$ at the constraint level by an independent contract object $Q_e\in\datacat$, with liability $\liab_e:1\to Q_e$, constraint subobject $Q_e^{\lambda_e}=\beta_{Q_e}(\liab_e)$, distributor $\dis_e:\payob_{s(e)}\to Q_e^{\lambda_e}$, and aggregator data adapted accordingly; the present source-typed model is the special case $Q_e=\payob_{s(e)}$. The liability-sheaf construction, the equalizer characterization (Proposition~\ref{prop:limit-equalizer}), and the existence theorems extend to this setting with notational adjustments only. The edge-typed variant is the natural setting for multi-currency settlement, derivative contracts, and delivery-versus-payment, and is developed in forthcoming work; for the categorical and order-theoretic content of the present paper, source-typing suffices.
\end{remark}

\begin{example}[Classical Eisenberg-Noe as a Liability Network]
\label{ex:EN-liability-graph}
The classical Eisenberg-Noe model corresponds to a liability network where $\datacat$ is the category of partially ordered sets with monotone maps. Each payment object is $\payob_v = [0, \infty]$ with the usual ordering, each liability morphism $\liab_e: 1 \to [0, \infty]$ maps $\star\mapsto\ell_e\in[0,\infty]$, the scalar liability owed from institution $s(e)$ to institution $t(e)$, and each exogenous resource morphism $\iota_v: 1 \to [0, \infty]$ maps $\star\mapsto c_v\in[0,\infty]$, the external assets available to institution $v$. In this case, the terminal object $1$ is the one-element poset $\{\star\}$, and morphisms from $1$ to $[0,\infty]$ correspond to selecting specific values.
\end{example}

Despite its natural structure as data over a network, a liability network is not yet a sheaf. The payment objects $\payob_v$ sit at vertices without any compatibility conditions between them, and the liability morphisms $\liab_e$ decorate edges without specifying how resources actually flow. Most crucially, there is no notion of a ``section'' that would assign compatible payments throughout the network, and consequently no sheaf-theoretic machinery for analyzing existence and uniqueness of clearing.

To see why the graph structure alone is insufficient, observe that clearing requires solving a system of constraints: each institution must balance incoming and outgoing resources while respecting liability bounds. These constraints involve relationships between vertices that are mediated by edges, but in a liability network, edges are merely decorated with data rather than serving as spaces where compatibility is enforced. Furthermore, real networks involve one-to-many distributions (an institution paying multiple creditors) and many-to-one collections (an institution receiving from multiple debtors) that do not fit naturally into the structure of ordinary directed graphs.

These limitations motivate our next construction. By introducing payment vertices that mediate between institutions and organizing the resulting structure as a directed hypergraph, we can define a proper sheaf whose global sections correspond to clearing configurations. The hypergraph structure naturally accommodates the one-to-many and many-to-one relationships inherent in liability networks, while the sheaf formalism provides the mathematical machinery for analyzing existence and uniqueness of clearing.

\subsection{Hypergraph Sheaves}
\label{ssec:hyper}

The classical Eisenberg-Noe model operates on directed graphs where edges represent pairwise obligations. However, real liability networks involve fundamentally non-pairwise relationships: institutions distribute payments to multiple creditors simultaneously, and aggregate receipts from multiple debtors. This one-to-many and many-to-one structure necessitates a richer combinatorial framework. We develop a theory of sheaves on directed hypergraphs that captures these multi-way relationships while maintaining a clear categorical structure.

\begin{definition}[Directed Hypergraph]
\label{def:hypergraph}
A \style{directed hypergraph} is $\hyper = (V_\hyper, E_\hyper, \source{\cdot}, \target{\cdot})$ where $V_\hyper$ is a finite set of vertices, $E_\hyper$ is a finite set of hyperedges, and $\source{\cdot}: E_\hyper \to 2^{V_\hyper}$ and $\target{\cdot}: E_\hyper \to 2^{V_\hyper}$ are functions assigning to each hyperedge its sets of source and target vertices. We require that for each $\hypedg \in E_\hyper$, at least one of $\source{\hypedg}$ or $\target{\hypedg}$ is nonempty.
\end{definition}

A directed hyperedge $\hypedg$ with $\source{\hypedg} = \{v_1, \ldots, v_k\}$ and $\target{\hypedg} = \{w_1, \ldots, w_l\}$ represents a relationship where resources flow from the source vertices collectively to the target vertices. This generalizes directed edges (where $k = l = 1$) to capture multi-party transactions.

To define sheaves on directed hypergraphs, we must first encode the incidence structure categorically. The challenge is to respect both the hypergraph topology and the directionality of resource flows.

\begin{definition}[Incidence Category of a Directed Hypergraph]
\label{def:incidence-category}
Given a directed hypergraph $\hyper = (V_\hyper, E_\hyper, \source{\cdot}, \target{\cdot})$, its \style{incidence category} $\Inc(\hyper)$ has:
\begin{itemize}
\item Objects: $V_\hyper \sqcup E_\hyper$.
\item Morphisms: For each incidence $v \in \source{\hypedg}\cup \target{\hypedg}$, there is exactly one morphism $\hypedg \to v$, plus identity morphisms at each object. There are no non-trivial compositions: $\Inc(\hyper)$ has no morphisms of length greater than 1.
\end{itemize}
\end{definition}

Equivalently, $\Inc(\hyper)$ is the free category on the bipartite directed graph with arrows $E_\hyper\to V_\hyper$; as a category it is a bipartite quiver with no nontrivial compositions. This makes computing limits and colimits straightforward, as there are no commutativity conditions beyond those directly specified by the incidence morphisms.
Although $\Inc(\hyper)$ is combinatorially a bipartite quiver, we retain the hypergraph language because the source and target assignments $\source{\cdot}, \target{\cdot}$ carry information -- in particular, multi-element target sets -- that the incidence category alone does not record. For the liability hypergraph of Section~\ref{ssec:formal-construction}, the distinction between source and target vertices of a hyperedge determines whether a restriction map is a distributor or a projection, information that would require additional decoration in a pure quiver formulation.

\begin{definition}[Sheaf on a Directed Hypergraph]
\label{def:hypergraph-sheaf}
Let $\hyper$ be a directed hypergraph and $\datacat$ be a category. 
A \style{sheaf} on $\hyper$ with values in $\datacat$ is a (covariant) functor $\sheaf{F}: \Inc(\hyper) \to \datacat$.
\end{definition}

Concretely, a sheaf $\sheaf{F}$ assigns:
\begin{itemize}
\item To each vertex $v \in V_\hyper$: an object $\stalk{\sheaf{F}}{v} \in \datacat$ (the \style{stalk at $v$})
\item To each hyperedge $\hypedg \in E_\hyper$: an object $\stalk{\sheaf{F}}{\hypedg} \in \datacat$ (the \style{stalk at $\hypedg$})
\item To each morphism $\hypedg \to v$ in $\Inc(\hyper)$: a morphism $\sheaf{F}(\hypedg \to v): \stalk{\sheaf{F}}{\hypedg} \to \stalk{\sheaf{F}}{v}$ in $\datacat$ (the \style{restriction map})
\end{itemize}

\begin{remark}
We define sheaves as covariant functors on $\Inc(\hyper)$ where arrows go from hyperedges to vertices. This is equivalent to the standard approach using contravariant functors on the opposite category, but aligns more naturally with our economic interpretation where resources flow from distribution/collection hyperedges to institution vertices.
\end{remark}

\begin{definition}[Global Sections]
\label{def:global-sections}
Let $\sheaf{F}$ be a sheaf on hypergraph $\hyper$ with values in a category $\datacat$ with finite limits. The \style{global-section object} is the limit
\[
H^0(\hyper; \sheaf{F})\ :=\ \lim_{\Inc(\hyper)} \sheaf{F}\ \in\ \datacat,
\]
and the \style{set of global sections} is
\[
\Gamma(\hyper; \sheaf{F})\ :=\ \Hom_{\datacat}\!\bigl(1,\, H^0(\hyper;\sheaf{F})\bigr).
\]
When $\datacat$ is concrete, we identify $\Gamma(\hyper;\sheaf{F})$ with the set of compatible section data assembled from global elements of the stalks.
\end{definition}

The notation $H^0$ is standard for the zeroth cohomology of a cellular sheaf; the present paper uses only $H^0$ and does not develop higher cohomology. Throughout, \emph{$H^0$ denotes the limit object in $\datacat$ and $\Gamma$ denotes its set of global elements}; statements about clearing as a categorical construction are phrased in terms of $H^0$, while statements about clearing configurations as concrete payment data are phrased in terms of $\Gamma$.

\begin{remark}
A global section $\sigma \in \Gamma(\hyper; \sheaf{F})$ corresponds to a collection of elements $\{\sigma_x \in \stalk{\sheaf{F}}{x}\}_{x \in V_\hyper \sqcup E_\hyper}$ (via the universal property of limits) such that for each morphism $\phi: x \to y$ in $\Inc(\hyper)$, we have $\sheaf{F}(\phi)(\sigma_x) = \sigma_y$. When working in concrete categories like $\mathbf{Pos}$ or $\mathbf{Set}$, we freely use this element-wise description.
\end{remark}

Since $\Inc(\hyper)$ is finite when $\hyper$ is finite, the limit defining global sections exists whenever $\datacat$ has finite products and equalizers. This is a weaker requirement than completeness and is satisfied by all our example categories.

\begin{example}[Distribution Hypergraph]
Consider a vertex $v$ that must distribute resources to vertices $w_1, w_2, w_3$. We model this with a hyperedge $\hypedg$ where $\source{\hypedg} = \{v\}$ and $\target{\hypedg} = \{w_1, w_2, w_3\}$. A sheaf on this structure encodes distribution through the stalk $\stalk{\sheaf{F}}{\hypedg}$ and enforces compatibility via morphisms $\stalk{\sheaf{F}}{\hypedg} \to \stalk{\sheaf{F}}{v}$ and $\stalk{\sheaf{F}}{\hypedg} \to \stalk{\sheaf{F}}{w_i}$.
\end{example}

The hypergraph sheaf framework provides the mathematical foundation for modeling complex liability networks. By separating vertices (representing institutions) from hyperedges (representing transactions), we can encode arbitrary distribution and aggregation patterns while maintaining a clean categorical structure. In the next subsections, we specialize this framework to construct liability sheaves that model financial clearing.

\subsection{Categorical Liabilities}
\label{ssec:categorical}

Consider a liability network $(G, \payob, \liab, \iota)$ from Definition \ref{def:liability-graph}. Intuitively, $\payob_v$ represents the set of possible payment states at institution $v$, $\liab_e$ specifies the ``maximum'' obligation from institution $s(e)$ to institution $t(e)$ along edge $e$, and $\iota_v$ specifies external resources available to institution $v$. All liabilities, resources, and payments are objects in a chosen data category. To make sense of payment flows and to facilitate defining clearing sections, we impose structural requirements on the category in which payments are valued. These requirements ensure that liability constraints can be represented categorically and that the resulting sheaf has well-behaved global sections.

\begin{definition}[Liability Category]
\label{def:liability-category}
A category $\datacat$ is a \style{liability category} if:

\begin{enumerate}
\item \textbf{Finite Limits:} $\datacat$ has a terminal object $1$, all finite products, and equalizers.

\item \textbf{Ordered Global Elements:} For each object $P$, the set $\Hom(1,P)$ is equipped with a partial order $\leqslant$, and for any morphism $f:P\to Q$ the induced map $f_\ast:\Hom(1,P)\to\Hom(1,Q)$ is order-preserving.

\item \textbf{Constraint Subobjects:} For each $P$, a distinguished class $\mathcal S_P$ of monomorphisms $m: P^c\hookrightarrow P$ (one chosen representative per subobject class), called \style{constraint subobjects}.  We require stability under pullback: if $m: P^c\hookrightarrow P$ lies in $\mathcal S_P$ and $f:Q\to P$, then a chosen pullback $f^\ast m:Q^c\hookrightarrow Q$ lies in $\mathcal S_Q$.  By abuse of notation, we identify each element of $\mathcal{S}_P$ with its domain object $P^c$ when no confusion arises.

\item \textbf{Bound Selector:} An assignment $\beta_P: \Hom(1,P) \to \mathcal{S}_P$ 
for each object $P$, associating to each global element a constraint subobject.

\item \textbf{Product-Order Compatibility:} For every finite family $(P_i)_{i\in I}$ of objects, the canonical map
\[
\Hom\!\bigl(1,\textstyle\prod_{i\in I} P_i\bigr)\ \longrightarrow\ \prod_{i\in I}\Hom(1,P_i)
\]
induced by the product projections is an order isomorphism, where the codomain carries the componentwise (product) order.  Componentwise completeness of the product when each factor is a complete lattice is then standard order theory and is invoked, where needed, in the proof of Theorem~\ref{thm:lattice-existence}.
\end{enumerate}
\end{definition}

\begin{remark}[Pullback Stability of Constraint Classes]
\label{rem:pullback-stability}
The pullback-stability requirement in axiom (3) constrains the
choice of $\mathcal{S}_P$.  In $\mathbf{Pos}$, principal
down-sets $\downarrow\!p$ are \emph{not} generally stable under
pullback by monotone maps: if $f:Q\to P$ is monotone, the
preimage $f^{-1}(\downarrow\!p)=\{q\in Q:f(q)\leqslant p\}$ is a
down-set but need not be principal.  Two remedies are available:
(i)~enlarge $\mathcal{S}_P$ to the class of all down-sets, which
is stable under pullback by any monotone map; or (ii)~restrict
morphisms to residuated maps (those possessing right adjoints),
under which principal down-sets pull back to principal down-sets.
Throughout this paper we adopt option~(i) for $\mathbf{Pos}$:
the constraint class consists of all down-sets, with principal
down-sets arising as the images of the bound selector $\beta$.
In $\mathbf{DCPO}$, the natural constraint class consists of Scott-closed downsets, and Scott-closed downsets are pullback-stable under Scott-continuous maps; the bound selector $\beta_P(\rho)=\downarrow\!\rho(\star)$ already lands in this class, since principal ideals in a dcpo are Scott-closed.  The concrete choices for each category are recorded in Remark~\ref{rem:constraint-practice} below.
\end{remark}

\begin{remark}
For a liability morphism $\liab: 1 \to P$ with $\lambda = \liab(\star)$, we write $P^{\lambda} := \beta_P(\liab)$ for the corresponding constraint subobject.
\end{remark}

\begin{remark}[Constraint Subobjects in Practice]
\label{rem:constraint-practice}
In concrete settings, constraint subobjects model bounds or feasibility domains:
\begin{itemize}
\item In $\mathbf{Pos}$: all down-sets $D = \{x \in P : x \leqslant d \text{ for some } d \in D\}$, which are stable under pullback by monotone maps. Principal down-sets $\downarrow\!\ell=\{x:x\leqslant \ell\}$ form an important subclass but require special care under general morphisms;
\item In $\mathbf{DCPO}$ (directed-complete partial orders with Scott-continuous maps): Scott-closed downsets, which are stable since Scott-continuous preimages of Scott-closed sets are Scott-closed; principal ideals $\downarrow\!\ell$ form an important Scott-closed subclass and serve as the image of the bound selector.
\end{itemize}
\end{remark}

\begin{example}[Payment Bounds]
\label{ex:bounded-subobject}
In $\mathbf{Pos}$, for a liability morphism $\liab: 1 \to P$ with liability datum $\lambda = \liab(\star)$, the constraint subobject $P^{\lambda}$ is the down-set $\{s \in P : s \leqslant \lambda\}$ with the inclusion map. In $\mathbf{DCPO}$, for $\liab: 1 \to L$ with $\lambda = \liab(\star)$, the constraint subobject is the principal ideal $\downarrow\!\lambda = \{x \in L : x \leqslant \lambda\}$, which is Scott-closed and equipped with the induced dcpo structure; the inclusion is Scott-continuous.
\end{example}

Certain categories will play prominent roles in examples of later sections. For completeness, we briefly verify that these are liability categories.

\begin{lemma}[Examples of Liability Categories]
\label{lem:liability-categories}
The following are liability categories in the sense of Definition \ref{def:liability-category}:
\begin{enumerate}
\item $\mathbf{Set}_{\mathrm{disc}}$: Sets with functions, where $\Hom(1, X) = X$ is equipped with the discrete order (all elements incomparable except to themselves), and constraint subobjects are subsets. The bound selector $\beta_X(x) = \{x\} \subseteq X$ maps each element to its singleton subset.
\item $\mathbf{Pos}$: Partially ordered sets with monotone maps, where $\Hom(1, P) = P$ uses the given order, and constraint subobjects are all down-sets (subsets $D \subseteq P$ where $x \in D$ and $y \leqslant x$ implies $y \in D$). The bound selector $\beta_P(p) = \downarrow\!p(\star) = \{x \in P : x \leqslant p(\star)\}$ maps to principal down-sets, a special case of down-sets.
\item $\mathbf{DCPO}$: Directed-complete partial orders with Scott-continuous maps, where $\Hom(1, P) = P$ inherits the dcpo order (the terminal object is the one-point dcpo), and constraint subobjects are Scott-closed downsets, with bound selector $\beta_P(\rho) = \downarrow\!\rho(\star) = \{x \in P : x \leqslant \rho(\star)\}$.
\end{enumerate}
\end{lemma}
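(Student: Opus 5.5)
The plan is to verify the five axioms of Definition~\ref{def:liability-category} separately for each of the three categories. In every case the terminal object is the one-point set or one-point (dcpo) poset, so $\Hom(1,P)$ is identified with the underlying set of $P$.

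\textbf{Axiom (1): finite limits.} In $\mathbf{Set}_{\mathrm{disc}}$ this is standard. In $\mathbf{Pos}$, products carry the componentwise order. The equalizer of $f,g:P\to Q$ is the subset $\{x: f(x)=g(x)\}$ with the induced order, and its universal property holds because a monotone map factors through a subset as a monotone map into the induced order.

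In $\mathbf{DCPO}$, products are again componentwise: directed sups are computed coordinatewise, and the projections are Scott-continuous. For the equalizer, I will check that $E=\{x: f(x)=g(x)\}$ is closed under directed sups. If $D\subseteq E$ is directed, then $f(\bigvee D)=\bigvee f(D)=\bigvee g(D)=g(\bigvee D)$. Hence $E$ is a sub-dcpo, its inclusion is Scott-continuous, and any Scott-continuous map landing in $E$ stays Scott-continuous into $E$.

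\textbf{Axioms (2) and (5): order on global elements.} Axiom (2) holds trivially for the discrete order, and in the other two cases because morphisms are monotone. Scott-continuous maps are monotone, since $x\leqslant y$ gives the directed set $\{x,y\}$. Axiom (5) holds because each product above carries exactly the componentwise order; in the discrete case, the product of discrete orders is discrete.

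\textbf{Axioms (3) and (4): constraint subobjects and the bound selector.}
\begin{itemize}
\item In $\mathbf{Set}_{\mathrm{disc}}$, every subset is admitted, so preimages along any function stay in the class. Also $\{x\}$ is a subset, so the bound selector is well defined.
\item In $\mathbf{Pos}$, the pullback of a down-set $D\hookrightarrow P$ along a monotone $f$ is $f^{-1}(D)$ with the induced order. This is a down-set: if $y\leqslant x$ and $f(x)\in D$, then $f(y)\leqslant f(x)$, so $f(y)\in D$. Principal down-sets are down-sets, so $\beta$ is well defined.
\item In $\mathbf{DCPO}$, I first check that the pullback of a Scott-closed down-set $C$ along Scott-continuous $f$ is $f^{-1}(C)$, and that this set is a sub-dcpo (closed under directed sups) whose inclusion realizes the pullback. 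I then check that $f^{-1}(C)$ is a down-set by monotonicity, and closed under directed sups because $f(\bigvee D)=\bigvee f(D)\in C$. Finally, $\downarrow\!p$ is a down-set closed under directed sups, since any sup of elements below $p$ is itself below $p$.
\end{itemize}

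The remaining subtlety, and the step I expect to need the most care, is the requirement of one chosen monomorphism per subobject class. In each case I will take the subset inclusion with the induced structure as the canonical representative. I must then confirm that every mono in the class is isomorphic over $P$ to such an inclusion. In $\mathbf{Pos}$ and $\mathbf{DCPO}$, monos are injective but not necessarily order-embeddings, so the class is \emph{defined} to consist of the order-embedded inclusions of down-sets (respectively Scott-closed down-sets). The chosen pullback is then the induced-order preimage, and the verification above shows that it is an actual limit in the category.

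Finally, I will remark that for $\mathbf{DCPO}$ the pullback is genuinely computed inside $\mathbf{DCPO}$, not merely in $\mathbf{Pos}$. This follows from the sub-dcpo argument, which is the only place where directed completeness enters.
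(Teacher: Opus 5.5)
Your proposal is correct and follows essentially the same route as the paper: an axiom-by-axiom check, using the same directed-supremum argument to show that the equalizer in $\mathbf{DCPO}$ is a sub-dcpo, and the same componentwise-order argument for axiom (5). Your explicit verification of pullback stability for down-sets and Scott-closed down-sets, and your choice of order-embedded inclusions as the representatives of each subobject class, fill in details that the paper only asserts in its proof and in Remarks~\ref{rem:pullback-stability} and~\ref{rem:constraint-practice}.
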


{\em Proof:} We verify the conditions of Definition \ref{def:liability-category} for each category.

\textbf{(1) Finite Limits.} All three categories have terminal objects: $\{\star\}$ in $\mathbf{Set}_{\mathrm{disc}}$, the one-element poset in $\mathbf{Pos}$, and the one-point dcpo in $\mathbf{DCPO}$. Finite products are Cartesian products with componentwise order in $\mathbf{Pos}$ and $\mathbf{DCPO}$, and Cartesian products of sets in $\mathbf{Set}_{\mathrm{disc}}$. Equalizers are subsets where functions agree in $\mathbf{Set}_{\mathrm{disc}}$, sub-posets with induced order in $\mathbf{Pos}$, and sub-dcpos with induced order in $\mathbf{DCPO}$. Closure of the equalizer under directed suprema follows from Scott-continuity: if $f,g:P\to Q$ are Scott-continuous and $D\subseteq\{p:f(p)=g(p)\}$ is directed, then $f(\sup D)=\sup f(D)=\sup g(D)=g(\sup D)$.

\textbf{(2) Ordered Global Elements.} The partial orders on $\Hom(1, -)$ are specified above.

\textbf{(3) Constraint Subobjects.} Each category has natural classes of constraint subobjects that are stable under pullback:
\begin{itemize}
\item In $\mathbf{Set}_{\mathrm{disc}}$: Any subset $S \subseteq X$ with inclusion; the bound selector produces singleton sets $\beta_X(x) = \{x\}$
\item In $\mathbf{Pos}$: All down-sets, with the bound selector producing principal down-sets $\downarrow\!p = \{x \in P : x \leqslant p\}$ as a special case
\item In $\mathbf{DCPO}$: Scott-closed downsets, with the bound selector producing principal ideals $\downarrow\!\rho = \{x \in P : x \leqslant \rho\}$ as a Scott-closed special case
\end{itemize}

\textbf{(4) Bound Selector.} The assignments $\beta$ are specified above: $\beta_X(x) = \{x\}$ in $\mathbf{Set}_{\mathrm{disc}}$; $\beta_P(p) = \downarrow\!p(\star)$ in $\mathbf{Pos}$; $\beta_P(\rho) = \downarrow\!\rho(\star)$ in $\mathbf{DCPO}$. In each case, $\beta_P(\rho) \in \mathcal{S}_P$ for every global element $\rho$. The order-preservation requirement in axiom~(2) is immediate: $\mathbf{Set}_{\mathrm{disc}}$ has discrete orders; morphisms in $\mathbf{Pos}$ are monotone by definition; Scott-continuous maps in $\mathbf{DCPO}$ are in particular monotone.

\textbf{(5) Product-Order Compatibility.} In each of the three categories the underlying-set functor preserves finite products, so the canonical map $\Hom(1,\prod_i P_i)\to\prod_i\Hom(1,P_i)$ is a bijection. The product order on $\Hom(1,\prod_i P_i)$ is by construction componentwise in $\mathbf{Pos}$ and $\mathbf{DCPO}$ (where finite products carry the componentwise order), and trivially componentwise in $\mathbf{Set}_{\mathrm{disc}}$ (where every order is discrete); the canonical map is therefore an order isomorphism.
\qed

\begin{remark}[Role of $\mathbf{DCPO}$]
\label{rem:dcpo-role}
The two main working categories play complementary roles: $\mathbf{Pos}$ supplies the broadest setting in which Tarski's theorem yields existence of clearing sections (Theorem~\ref{thm:lattice-existence}), while $\mathbf{DCPO}$ refines this to the Scott-continuous regime in which Kleene iteration converges to the least clearing section (Subsection~\ref{ssec:kleene}). Passage between the two via the forgetful functor $U:\mathbf{DCPO}\to\mathbf{Pos}$ is captured by Clearing Invariance (Example~\ref{ex:dcpo-to-pos}).
\end{remark}

\begin{remark}
Categories of ordered abelian groups ($\mathbf{oAb}$) and ordered vector spaces are natural candidates for liability categories. However, in these categories the terminal object is the zero object, so $\Hom(1, P) = \{0\}$ is a singleton and does not recover arbitrary elements of $P$. To obtain a useful supply of global elements, one must pass to a pointed or affine variant  --  for instance, by replacing the terminal object $1$ with a designated ``base'' object, or by equipping each $P$ with an underlying-set functor providing the order structure required by axiom~(2).
\end{remark}

\subsection{Liability Sheaves}
\label{ssec:formal-construction}

We now specialize the hypergraph sheaf framework to model liability networks with resource flows, obligations, and clearing constraints. Starting from a liability network representing pairwise obligations between institutions, we systematically construct a hypergraph that separates payment flows from institutional states, then define a sheaf whose global sections correspond to clearing configurations.

\begin{definition}[Liability Hypergraph Construction]
\label{def:liability-hypergraph}
Given a liability network $(G, \payob, \liab, \iota)$, we construct the \style{liability hypergraph} $\hyper_G = (V_{\hyper}, E_{\hyper}, \source{\cdot}, \target{\cdot})$ as follows.

The vertex set $V_{\hyper} = V \sqcup V_{\mathrm{pay}}$ consists of the original \style{institution vertices} $V$ and \style{payment vertices} $V_{\mathrm{pay}} = \{e^* : e \in E\}$, one for each edge in $G$.

The hyperedge set decomposes as $E_{\hyper} = E_{\mathrm{dis}} \sqcup E_{\mathrm{col}}$:
\begin{itemize}
\item \textbf{Distribution hyperedges:} for each $v \in V$, introduce $h_v^{\dis}$ with 
\[
\source{h_v^{\dis}} = \{v\},\qquad \target{h_v^{\dis}} = \{e^* : e \in E,\ s(e) = v\}.
\]
\item \textbf{Collection hyperedges:} for each $v \in V$, introduce $h_v^{\col}$ with 
\[
\source{h_v^{\col}} = \emptyset,\qquad \target{h_v^{\col}} = \{e^* : e \in E,\ t(e) = v\}\ \cup\ \{v\}.
\]
\end{itemize}
When $t^{-1}(v) = \emptyset$, the product $\prod_{t(e)=v}\payob_{s(e)}^{\lambda_e}$ is the empty product, interpreted as the terminal object $1$ in $\datacat$. In this case, $\col_v: 1 \to \payob_v$ is still defined according to the specific model's aggregation rule, where the contribution from incoming payments is typically an appropriate identity element (e.g., $0$ for summation, $\delta_0$ for convolution).

Thus, in the incidence category $\Inc(\hyper_G)$ the generating arrows are
\[
h_v^{\dis}\to v,\quad h_v^{\dis}\to e^*\ \ (s(e)=v);\qquad
h_v^{\col}\to e^*\ \ (t(e)=v),\quad h_v^{\col}\to v.
\]
\end{definition}

\begin{remark}
The choice $\source{h_v^{\col}}=\emptyset$ is consistent with Definition~\ref{def:hypergraph}, since $\target{h_v^{\col}}=\{e^* : t(e)=v\}\cup\{v\}$ is nonempty.  Setting the source empty ensures that $h_v^{\col}$ \style{projects} to each incoming payment vertex and to $v$, aligning the categorical universal property of products with our use of a product stalk at $h_v^{\col}$ and removing any need for non-canonical maps $e^*\to h_v^{\col}$.
\end{remark}

Each institution $v$ has exactly one outgoing distribution hyperedge $h_v^{\dis}$ that connects to all payment vertices for its outgoing obligations, and exactly one incoming collection hyperedge $h_v^{\col}$ that aggregates all incoming payments. This bipartite structure cleanly separates the distribution and collection operations.

We can now define the central object of our framework.

\begin{definition}[Liability Sheaf]
\label{def:liability-sheaf}
Let $(G,\payob,\liab,\iota)$ be valued in a liability category $\datacat$ (Definition~\ref{def:liability-category}). The \style{liability sheaf} is a functor $\sheaf{L}: \Inc(\hyper_G) \to \datacat$ defined by:

\medskip
\noindent\textbf{Stalks.}
\[
\stalk{\sheaf{L}}{v}=\payob_v\quad (v\in V),\qquad
\stalk{\sheaf{L}}{e^*}=\beta_{\payob_{s(e)}}(\liab_e)\quad (e^*\in V_{\mathrm{pay}}),
\]
\[
\stalk{\sheaf{L}}{h_v^{\dis}}=\payob_v,\qquad
\stalk{\sheaf{L}}{h_v^{\col}}=\prod_{t(e)=v}\payob_{s(e)}^{\lambda_e}.
\]

\noindent\textbf{Restriction Maps.}
For morphisms in $\Inc(\hyper_G)$, the sheaf assigns:
\[
\sheaf{L}(h_v^{\dis} \to v)=\id_{\payob_v},\qquad
\sheaf{L}(h_v^{\dis} \to e^*)=\dis_e:\payob_v\to \payob_{s(e)}^{\lambda_e}\quad (s(e)=v),
\]
\[
\sheaf{L}(h_v^{\col} \to e^*)=\pi_e:\ \prod_{t(e')=v}\payob_{s(e')}^{\lambda_{e'}}\ \longrightarrow\ \payob_{s(e)}^{\lambda_e}\quad (t(e)=v),
\]
\[
\sheaf{L}(h_v^{\col} \to v)=\col_v:\ \prod_{t(e)=v}\payob_{s(e)}^{\lambda_e}\ \longrightarrow\ \payob_v,
\]
\end{definition}

Throughout, we write $\payob_{s(e)}^{\lambda_e} := \beta_{\payob_{s(e)}}(\liab_e)$ where $\lambda_e = \liab_e(\star)$ is the liability datum selected by the morphism $\liab_e: 1 \to \payob_{s(e)}$.  The partial aggregators $\col_v$ used in the sheaf are derived from the data $(\widehat\col_v,\iota_v)$ as in Definition~\ref{def:liability-graph}.

\begin{remark}[Empty-Product Convention]
\label{rem:exogenous}
When $t^{-1}(v)=\emptyset$, the empty product is the terminal object $1$, and the partial aggregator reduces to $\col_v=\widehat\col_v\circ(\iota_v\times\id_1):1\to\payob_v$; on global elements, $\col_v(\star)=\widehat\col_v(\iota_v(\star),\star)$.  In concrete categories, this is the value of $\widehat\col_v$ at the exogenous resource alone with the appropriate identity element for the aggregation operation (e.g.\ $0$ for summation, $\delta_0$ for convolution).
\end{remark}

The principal motivation for adopting this sheaf formalism is the ability to frame the clearing problem as a problem of global sections of the sheaf.

\begin{theorem}[Clearing Sections Characterization]
\label{thm:clearing-sections}
Let $\sheaf{L}$ be the liability sheaf on $\hyper_G$. A global section $\sigma \in \Gamma(\hyper_G;\sheaf{L})$ is uniquely determined by elements
\[
x_v\in\payob_v\quad (v\in V),\qquad p_e\in\payob_{s(e)}^{\lambda_e}\quad (e\in E),
\]
satisfying
\[
\boxed{\ p_e=\dis_e\big(x_{s(e)}\big)\ \ \text{ for all }e\in E\ }\qquad\text{and}\qquad
\boxed{\ x_v=\col_v\big((p_e)_{t(e)=v}\big)\ \ \text{ for all }v\in V\ }.
\]
\end{theorem}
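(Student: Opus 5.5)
The plan is to compute $\Gamma(\hyper_G;\sheaf{L})=\Hom_\datacat(1,\lim_{\Inc(\hyper_G)}\sheaf{L})$ directly from the universal property of the limit. Since $\Hom(1,-)$ preserves limits, a global element of $H^0$ corresponds bijectively to a cone from $1$ over $\sheaf{L}$. Such a cone is a family of global elements $\sigma_x:1\to\sheaf{L}(x)$, one for each object $x\in V_\hyper\sqcup E_\hyper$, with $\sheaf{L}(\phi)\circ\sigma_x=\sigma_y$ for every generating arrow $\phi:x\to y$. Because $\Inc(\hyper_G)$ has no nontrivial composites (Definition~\ref{def:incidence-category}), only the generating arrows listed after Definition~\ref{def:liability-hypergraph} impose conditions. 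I will write out these compatibility conditions and then eliminate the hyperedge components as redundant.

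\textbf{Step 1: distribution hyperedges.} At $h_v^{\dis}$ there is a component $y_v:1\to\payob_v$. The arrow $h_v^{\dis}\to v$ carries $\id_{\payob_v}$, so it forces $y_v=x_v$, where $x_v:=\sigma_v$. The arrows $h_v^{\dis}\to e^*$ with $s(e)=v$ then force $p_e:=\sigma_{e^*}=\dis_e\circ x_{s(e)}$. So the $h^{\dis}$ components carry no independent data, and they yield exactly the first boxed equation.

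\textbf{Step 2: collection hyperedges.} At $h_v^{\col}$ there is a component $q_v:1\to\prod_{t(e)=v}\payob_{s(e)}^{\lambda_e}$. The projection arrows $h_v^{\col}\to e^*$ force $\pi_e\circ q_v=p_e$ for every $e$ with $t(e)=v$. By the universal property of the product, this says $q_v=\langle p_e\rangle_{t(e)=v}$, so $q_v$ is determined by the $p_e$. This is where the choice $\source{h_v^{\col}}=\emptyset$ pays off: the stalk is a genuine product and its legs are the projections. The remaining arrow $h_v^{\col}\to v$ gives $x_v=\col_v\circ\langle p_e\rangle$, which is the second boxed equation. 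When $t^{-1}(v)=\emptyset$, the element $q_v$ is the unique map $1\to1$, and the condition reads $x_v=\col_v$, in agreement with Remark~\ref{rem:exogenous}.

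\textbf{Step 3: conclusion.} Conversely, given $(x_v)$ and $(p_e)$ satisfying both equations, I set $y_v=x_v$ and $q_v=\langle p_e\rangle$. This defines a cone, and the cone is unique because the hyperedge components were forced in Steps 1 and 2. Hence the map $\sigma\mapsto((x_v),(p_e))$ is a bijection onto the solutions of the boxed system, which proves the statement.

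The only point needing care is that the global elements $p_e$ live in $\payob_{s(e)}^{\lambda_e}$, viewed through the chosen monomorphism into $\payob_{s(e)}$. This is automatic, because $\dis_e$ is typed with codomain $\beta_{\payob_{s(e)}}(\liab_e)$. I also have to note that "uniquely determined" means uniqueness up to the limit's universal property, which requires $\Hom(1,-)$ to send the limit to the set of cones; this holds by definition of a limit. I expect no real obstacle beyond this bookkeeping. The substantive step is Step 2, which uses the product universal property to collapse the $h^{\col}$ data.
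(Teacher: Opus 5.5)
Your proof is correct and follows the same route as the paper's. The identity and distributor restrictions at $h_v^{\dis}$ force the hyperedge and payment-vertex components, the projections at $h_v^{\col}$ force that component to be $(p_e)_{t(e)=v}$, the aggregator restriction gives the second equation, and any solution assembles back into a section; your identification of global elements of the limit with cones from $1$ (via $\Hom(1,-)$ preserving limits) is a slightly cleaner justification of the element-wise description the paper uses, and your empty-product case matches Remark~\ref{rem:exogenous}.
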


{\em Proof:} A global section $\sigma \in \Gamma(\hyper_G; \sheaf{L})$ corresponds to a collection of elements:
\begin{itemize}
\item $\sigma_v \in \stalk{\sheaf{L}}{v} = \payob_v$ for each institution vertex $v \in V$
\item $\sigma_{e^*} \in \stalk{\sheaf{L}}{e^*} = \payob_{s(e)}^{\lambda_e}$ for each payment vertex $e^* \in V_{\mathrm{pay}}$
\item $\sigma_h \in \stalk{\sheaf{L}}{h}$ for each hyperedge $h \in E_{\hyper}$
\end{itemize}
These must satisfy compatibility conditions from the sheaf structure.

For the distribution hyperedge $h_v^{\dis}$, compatibility requires $\sheaf{L}(h_v^{\dis} \to v)(\sigma_{h_v^{\dis}}) = \sigma_v$. Since $\sheaf{L}(h_v^{\dis} \to v) = \id_{\payob_v}$, we must have $\sigma_{h_v^{\dis}} = \sigma_v$. Let us denote this common element by $x_v \in \payob_v$.

Next, for each edge $e$ with $s(e) = v$, compatibility requires $\sheaf{L}(h_v^{\dis} \to e^*)(\sigma_{h_v^{\dis}}) = \sigma_{e^*}$. Since $\sheaf{L}(h_v^{\dis} \to e^*) = \dis_e$ and $\sigma_{h_v^{\dis}} = x_v$, we have $\sigma_{e^*} = \dis_e(x_v)$. Let us denote $\sigma_{e^*}$ by $p_e$, giving us the first boxed equation: $p_e = \dis_e(x_{s(e)})$.

For the collection hyperedge $h_v^{\col}$, the restrictions $\sheaf{L}(h_v^{\col} \to e^*)$ are projections $\pi_e$ from the product. Thus $\sigma_{h_v^{\col}} \in \prod_{t(e)=v} \payob_{s(e)}^{\lambda_e}$ must satisfy $\pi_e(\sigma_{h_v^{\col}}) = \sigma_{e^*} = p_e$ for each incoming edge. This forces $\sigma_{h_v^{\col}} = (p_e)_{t(e)=v}$.

Finally, compatibility requires $\sheaf{L}(h_v^{\col} \to v)(\sigma_{h_v^{\col}}) = \sigma_v$. Since $\sheaf{L}(h_v^{\col} \to v) = \col_v$ and $\sigma_{h_v^{\col}} = (p_e)_{t(e)=v}$, we obtain $x_v = \col_v((p_e)_{t(e)=v})$, yielding the second boxed equation.

Conversely, given any collection $(x_v)_{v \in V}$ and $(p_e)_{e \in E}$ satisfying both boxed equations, we can define $\sigma$ by setting $\sigma_v = x_v$, $\sigma_{e^*} = p_e$, $\sigma_{h_v^{\dis}} = x_v$, and $\sigma_{h_v^{\col}} = (p_e)_{t(e) = v}$. The equations ensure all compatibility conditions are satisfied, yielding a global section. Thus $\sigma$ is uniquely determined by the pair $(\{x_v\}_{v \in V}, \{p_e\}_{e \in E})$ satisfying the two boxed equations.
\qed

\begin{corollary}
The clearing problem for a liability network reduces to computing the set of global sections $\Gamma(\hyper_G; \sheaf{L})$ of the associated liability sheaf.
\end{corollary}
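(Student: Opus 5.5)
The corollary follows almost immediately from Theorem~\ref{thm:clearing-sections}. The plan is to make precise what ``the clearing problem'' means and then observe that its solution set is in canonical bijection with $\Gamma(\hyper_G;\sheaf{L})$.

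First, I would fix the definition. A \emph{clearing configuration} for the liability network $(G,\payob,\liab,\iota;\dis,\widehat\col)$ is a pair $\bigl((x_v)_{v\in V},(p_e)_{e\in E}\bigr)$ with $x_v\in\payob_v$ and $p_e\in\payob_{s(e)}^{\lambda_e}$. It must satisfy two conditions:
\begin{itemize}
\item each institution pays out according to its distribution rule, $p_e=\dis_e(x_{s(e)})$;
\item each institution's state equals the aggregate of its exogenous resource and its receipts, $x_v=\widehat\col_v\bigl(\iota_v(\star),(p_e)_{t(e)=v}\bigr)$.
\end{itemize}
Using the definition of the partial aggregator $\col_v=\widehat\col_v\circ(\iota_v\times\id)$ and the empty-product convention of Remark~\ref{rem:exogenous}, the second condition is exactly $x_v=\col_v\bigl((p_e)_{t(e)=v}\bigr)$, including at vertices with $t^{-1}(v)=\emptyset$.

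Second, I would invoke Theorem~\ref{thm:clearing-sections}. Its assignment $\sigma\mapsto(\sigma_v,\sigma_{e^*})$ sends global sections to pairs satisfying the two boxed equations. It is injective because the hyperedge components are forced: $\sigma_{h_v^{\dis}}=x_v$ and $\sigma_{h_v^{\col}}=(p_e)_{t(e)=v}$. It is surjective by the converse construction in that proof. Hence $\Gamma(\hyper_G;\sheaf{L})$ is in bijection with the set of clearing configurations. Computing clearing configurations (existence, enumeration, extremal elements) is therefore the same problem as computing $\Gamma$.

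Finally, if one also wants the reduction to respect the order structure, I would check that the bijection is an order isomorphism. The order on $\Gamma$ comes from $\Hom(1,H^0)$. Realizing $H^0$ as a subobject of a finite product of stalks, product-order compatibility (axiom (5) of Definition~\ref{def:liability-category}) identifies this order with the componentwise order on the data $(x_v,p_e)$. The hyperedge components add no new comparisons, since they are images of $(x,p)$ under morphisms, which are order-preserving by axiom (2).

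The main, and modest, obstacle is this last point. The induced order on global elements of an equalizer subobject needs to agree with the restricted product order. In $\mathbf{Pos}$ and $\mathbf{DCPO}$ this holds because limits carry the induced order. In general I would take the order on $\Gamma$ to be defined by restriction along the mono $H^0\hookrightarrow\prod$ rather than try to derive it from the axioms.
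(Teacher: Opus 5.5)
Your proposal is correct and follows the paper's route. The paper gives no separate argument for this corollary; it reads it off Theorem~\ref{thm:clearing-sections}, and your unpacking of $\col_v=\widehat\col_v\circ(\iota_v\times\id)$, together with injectivity and surjectivity of $\sigma\mapsto(\sigma_v,\sigma_{e^*})$, is exactly what that theorem supplies. Your order-theoretic addendum goes beyond the statement, and your caveat is well placed: axiom (2) only makes the map on global elements induced by $H^0\hookrightarrow\prod_x\sheaf{L}(x)$ monotone, not order-reflecting, which is why the paper simply defines the order on $\Gamma$ pointwise in Theorem~\ref{thm:lattice-existence}.
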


Theorem~\ref{thm:clearing-sections} characterizes global sections
element-wise.  The following proposition upgrades this to an
object-level identification in $\datacat$, expressing the
global-section object as an equalizer of the clearing operator.

\begin{proposition}[Limit as Equalizer]
\label{prop:limit-equalizer}
Let\/ $\sheaf{L}$ be a liability sheaf on\/ $\hyper_G$ valued in a
liability category $\datacat$.  Set
\[
P \;=\; \prod_{v\in V}\payob_v ,\qquad
B \;=\; \prod_{e\in E}\payob_{s(e)}^{\lambda_e} ,
\]
and define the \style{collective distribution}
$D:P\to B$ and \style{collective aggregation} $A:B\to P$ by
\[
D(\mathbf{x})_e = \dis_e(x_{s(e)}),\qquad
A(\mathbf{p})_v = \col_v\!\bigl((p_e)_{t(e)=v}\bigr).
\]
Then the clearing operator is $\Phi=A\circ D:P\to P$, and
there is a canonical isomorphism in $\datacat$:
\[
\lim_{\Inc(\hyper_G)}\sheaf{L}
\;\cong\;
\Eq(\id_P,\,\Phi).
\]
\end{proposition}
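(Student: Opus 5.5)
We prove the isomorphism at the object level by showing that $\Eq(\id_P,\Phi)$, equipped with suitable legs, satisfies the universal property of $\lim_{\Inc(\hyper_G)}\sheaf{L}$. Working with representable functors is not enough: $\Hom(1,-)$ need not be jointly faithful in a general liability category, so Theorem~\ref{thm:clearing-sections} is only a guide. Instead we argue with arbitrary generalized elements $T\to(-)$. Both $\lim\sheaf{L}$ and $\Eq(\id_P,\Phi)$ exist by axiom~(1), so it suffices to exhibit a bijection, natural in $T$, between cones $T\Rightarrow\sheaf{L}$ and morphisms $x:T\to P$ with $\Phi\circ x=x$. The Yoneda lemma then gives the canonical isomorphism.

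\textbf{Step 1 (cones to fixed points).} A cone $(c_y)_y$ over $\sheaf{L}$ with vertex $T$ has components at every object of $\Inc(\hyper_G)$, and each generating arrow imposes one equation.
\begin{itemize}
\item The arrow $h_v^{\dis}\to v$ has restriction $\id$, so it forces $c_{h_v^{\dis}}=c_v$. Write $x_v:=c_v$ and $x=\langle x_v\rangle:T\to P$.
\item The arrows $h_v^{\dis}\to e^*$ force $c_{e^*}=\dis_e\circ x_{s(e)}$. Setting $p:=\langle c_{e^*}\rangle:T\to B$, this says exactly $p=D\circ x$, using the universal property of $B$ and the definition of $D$ componentwise.
\item The arrows $h_v^{\col}\to e^*$ are projections, so they force $c_{h_v^{\col}}=\langle c_{e^*}\rangle_{t(e)=v}$.
\item The arrow $h_v^{\col}\to v$ forces $x_v=\col_v\circ\langle c_{e^*}\rangle_{t(e)=v}=A(p)_v$.
\end{itemize}
Together these give $x=A\circ D\circ x=\Phi\circ x$.

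\textbf{Step 2 (fixed points to cones).} Given $x:T\to P$ with $\Phi x=x$, define the cone by
\[
c_v=c_{h_v^{\dis}}=\pi_v x,\qquad c_{e^*}=\pi_e Dx,\qquad c_{h_v^{\col}}=\langle\pi_e Dx\rangle_{t(e)=v}.
\]
Each naturality square now reduces either to the definitions of $D$ and $A$ or to the fixed-point equation $\pi_v\Phi x=\pi_v x$. The two constructions are mutually inverse, since by Step~1 every cone component is determined by $x$. Naturality in $T$ is immediate, because all the constructions are postcomposition.

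\textbf{Main obstacle.} The only real subtlety is making $D$ and $A$ precise as morphisms in $\datacat$. We define $D:=\langle\dis_e\circ\pi_{s(e)}\rangle_e$, and we define $A:=\langle\col_v\circ\langle\pi_e\rangle_{t(e)=v}\rangle_v$. For a vertex with $t^{-1}(v)=\emptyset$, $\langle\pi_e\rangle_{t(e)=v}$ is the unique map to $1$ (Remark~\ref{rem:exogenous}), so the formula still makes sense. With these definitions, all the componentwise identities used above follow from the uniqueness clause of the product universal property. Finally, we record that composing the equalizer inclusion with the leg at $v$ identifies the isomorphism with the map $\lim\sheaf{L}\to P$ built from the legs at institution vertices. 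This is what makes the isomorphism canonical.
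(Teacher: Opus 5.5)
Your proof is correct and follows essentially the same route as the paper: both show that a cone over $\sheaf{L}$ with vertex $T$ is determined by its institution-vertex components $f:T\to P$, subject only to $\Phi\circ f=f$, naturally in $T$, so that $\Eq(\id_P,\Phi)$ satisfies the universal property of the limit. What you add are details the paper leaves implicit: the explicit definition of $D$ and $A$ as product pairings, the empty-product case, and the reverse check of the naturality squares.
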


{\em Proof:}
We verify that $\Eq(\id_P,\Phi)$ satisfies the
universal property of $\lim\sheaf{L}$.  A cone over
$\sheaf{L}$ from an object $T$ consists of morphisms to every
stalk, compatible with every restriction map.  Among the four
families of restriction maps in $\sheaf{L}$
(Definition~\ref{def:liability-sheaf}), three impose no
independent conditions:
the identity restriction
$\sheaf{L}(h_v^{\dis}\!\to v)=\id_{\payob_v}$ forces the
distribution-hyperedge component to equal the institution-vertex
component $f_v:T\to\payob_v$;
the distributor restrictions
$\sheaf{L}(h_v^{\dis}\!\to e^*)=\dis_e$ force the
payment-vertex component to be $\dis_e\circ f_{s(e)}$;
the projection restrictions
$\sheaf{L}(h_v^{\col}\!\to e^*)=\pi_e$ force the
collection-hyperedge component to be
$(\dis_e\circ f_{s(e)})_{t(e)=v}$.
Thus the entire cone is determined by a single morphism
$f=(f_v):T\to P$, subject to the remaining condition from
$\sheaf{L}(h_v^{\col}\!\to v)=\col_v$:
\[
\col_v\!\bigl((\dis_e\circ f_{s(e)})_{t(e)=v}\bigr) = f_v
\qquad\text{for all }v\in V,
\]
which is $\Phi\circ f = f$.
A morphism $f:T\to P$ with $\Phi\circ f=f$ corresponds
bijectively and naturally in $T$ to a morphism
$T\to\Eq(\id_P,\Phi)$.
\qed

\begin{remark}[Three Formulations of Clearing]
\label{rem:three-formulations}
Proposition~\ref{prop:limit-equalizer} identifies three
equivalent formulations of the clearing problem, at
progressively finer levels of structure:
\begin{itemize}
\item \style{Operator-theoretic:}  Clearing states are
fixed points of $\Phi:P\to P$.  Tarski's or Banach's theorem
applies directly at this level.
\item \style{Sheaf-theoretic:}  The global-section object
$H^0(\hyper_G;\sheaf{L})=\lim\sheaf{L}\cong\Eq(\id_P,\Phi)$ lives in
$\datacat$; its set of global elements recovers the fixed points via
\[
\Gamma(\hyper_G;\sheaf{L})=\Hom_\datacat(1,H^0(\hyper_G;\sheaf{L}))\cong\Fix(\Phi_\ast).
\]
\item \style{Structured:}  The factorization $\Phi=A\circ D$
decomposes the clearing operator into local
operations -- distribution $D$ and aggregation $A$ ---
governed by the hypergraph topology.  This constrains the
class of admissible operators and makes the flow structure
explicit.
\end{itemize}
The first level suffices for existence and uniqueness; the
second provides an object in $\datacat$ on which functorial
constructions (e.g.\ morphisms between networks) can act;
the third is the structural content of the sheaf formalism.
\end{remark}

\begin{proposition}[Institution-Edge Duality]
\label{prop:edge-side-equalizer}
With notation as in Proposition~\ref{prop:limit-equalizer}, define the \style{edge-side clearing operator}
$\Phidual = D\circ A:B\to B$.
The collective distribution and aggregation restrict to mutually inverse isomorphisms
\[
D:\ \Eq(\id_P,\,\Phi)\ \xrightarrow{\ \cong\ }\ \Eq(\id_B,\,\Phidual),
\qquad
A:\ \Eq(\id_B,\,\Phidual)\ \xrightarrow{\ \cong\ }\ \Eq(\id_P,\,\Phi)
\]
in $\datacat$.  In particular, $H^0(\hyper_G;\sheaf{L})\cong\Eq(\id_B,\,D\circ A)$, and on global elements the induced maps $D_\ast:\Fix(\Phi_\ast)\to\Fix(\Phidualast)$ and $A_\ast:\Fix(\Phidualast)\to\Fix(\Phi_\ast)$ are mutually inverse bijections.
\end{proposition}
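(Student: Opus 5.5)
Let $e_P:\Eq(\id_P,\Phi)\hookrightarrow P$ and $e_B:\Eq(\id_B,\Phidual)\hookrightarrow B$ denote the equalizer monos. The proof is a standard "rotation" argument for fixed points of composites. We use only the universal property of equalizers in $\datacat$, which exists by axiom (1) of Definition~\ref{def:liability-category}, and never argue elementwise.

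\textbf{Step 1: restriction.} First show that $D$ and $A$ restrict as claimed. For $D$, we need $\Phidual\circ D\circ e_P = D\circ e_P$. This holds because
\[
D\circ A\circ D\circ e_P = D\circ \Phi\circ e_P = D\circ e_P .
\]
By the universal property of $e_B$, there is a unique $\bar D:\Eq(\id_P,\Phi)\to\Eq(\id_B,\Phidual)$ with $e_B\circ\bar D = D\circ e_P$. Symmetrically,
\[
\Phi\circ A\circ e_B = A\circ D\circ A\circ e_B = A\circ\Phidual\circ e_B = A\circ e_B ,
\]
which yields a unique $\bar A$ with $e_P\circ\bar A = A\circ e_B$.

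\textbf{Step 2: mutual inverses.} Compute
\[
e_P\circ\bar A\circ\bar D = A\circ e_B\circ\bar D = A\circ D\circ e_P = \Phi\circ e_P = e_P = e_P\circ\id .
\]
Since $e_P$ is monic, $\bar A\circ\bar D=\id$. The same computation with the roles of $P$ and $B$ swapped gives $\bar D\circ\bar A=\id$.

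Combining this isomorphism with Proposition~\ref{prop:limit-equalizer} gives $H^0(\hyper_G;\sheaf{L})\cong\Eq(\id_B,D\circ A)$.

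\textbf{Step 3: global elements.} Apply $\Hom(1,-)$, which preserves limits, hence equalizers. This identifies $\Hom(1,\Eq(\id_P,\Phi))$ with $\Fix(\Phi_\ast)$ and $\Hom(1,\Eq(\id_B,\Phidual))$ with $\Fix(\Phidualast)$. Under these identifications, $\bar D_\ast$ and $\bar A_\ast$ become the restrictions of $D_\ast$ and $A_\ast$, and functoriality makes them mutually inverse bijections. Alternatively, one can rerun the same two-line argument on elements.

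\textbf{Main obstacle.} No step is genuinely hard; the only care needed is bookkeeping. One point to check is that $\Phidual_\ast = D_\ast\circ A_\ast$ by functoriality. The other is that we must cancel with the monos $e_P$ and $e_B$ rather than assume any concreteness of $\datacat$.
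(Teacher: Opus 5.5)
Your proposal is correct and takes essentially the same route as the paper. The paper phrases the restriction step and the identity checks ($A\circ D\circ f=\Phi\circ f=f$ and $D\circ A\circ g=\Phidual\circ g=g$) via generalized elements $f:T\to P$, $g:T\to B$ factoring through the equalizers, whereas you instantiate at the equalizer monos $e_P$, $e_B$ and cancel by monicity; this is the same argument made explicit, and the passage to $\Fix(\Phi_\ast)$ and $\Fix(\Phidualast)$ by applying $\Hom_\datacat(1,-)$ is identical.
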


{\em Proof:}
By the universal property of the equalizer, a morphism $f:T\to P$ factors through $\Eq(\id_P,\Phi)$ iff $\Phi\circ f=f$, equivalently $A\circ D\circ f=f$.  Postcomposing with $D$ gives $D\circ A\circ D\circ f=D\circ f$, exhibiting $D\circ f:T\to B$ as factoring through $\Eq(\id_B,\Phidual)$; this defines a morphism $\Eq(\id_P,\Phi)\to\Eq(\id_B,\Phidual)$, by abuse of notation again denoted $D$.  Symmetrically, $A:B\to P$ induces a morphism $\Eq(\id_B,\Phidual)\to\Eq(\id_P,\Phi)$.  The composites are identities on the respective equalizers: on $\Eq(\id_P,\Phi)$ the composite sends $f$ to $A\circ D\circ f=\Phi\circ f=f$, and on $\Eq(\id_B,\Phidual)$ it sends $g$ to $D\circ A\circ g=\Phidual\circ g=g$.  Hence the two restrictions are mutually inverse isomorphisms.  Applying $\Hom_\datacat(1,-)$ gives the bijection on fixed-point sets.
\qed

\begin{remark}[Content Beyond the Universal Property]
\label{rem:edge-side-content}
Proposition~\ref{prop:edge-side-equalizer} depends on the factorization $\Phi=A\circ D$ supplied by the sheaf, not on $\Phi$ alone, and is therefore invisible to a black-box treatment of the clearing operator.  Two consequences are practical:
\begin{itemize}
\item \style{Edge-side fixed-point arguments.}  When the edge-side $B=\prod_e\payob_{s(e)}^{\lambda_e}$ is order-theoretically or metrically more tractable than the institution-side $P=\prod_v\payob_v$ -- for instance when constraint subobjects are bounded while payment objects are not, as in Eisenberg--Noe with $\payob_v=[0,\infty]$ and $\payob_{s(e)}^{\lambda_e}=[0,\ell_e]$ -- existence and uniqueness arguments may be run on $\Phidual$ and transported to $\Phi$ via $A$.
\item \style{Iteration on either side.}  When the edge-side object $B$ carries the order-theoretic or metric structure required by Theorem~\ref{thm:kleene} or Theorem~\ref{thm:banach-global}, and $\Phidual=D\circ A$ satisfies the corresponding monotonicity, Scott-continuity, or contraction hypothesis, the fixed-point argument may be run on $\Phidual$ and transported to $\Phi$ via $A$.  The choice of side is dictated by which operator satisfies the useful hypotheses in a given example.
\end{itemize}
\end{remark}

\subsection{Functoriality and Clearing Invariance}
\label{ssec:functoriality}

The global-section object $H^0(\hyper_G;\sheaf{L})=\lim_{\Inc(\hyper_G)}\sheaf{L}$ is a finite limit in $\datacat$, and limits are natural in two directions: in the sheaf (fixing the base hypergraph) and in the ambient category (fixing the diagram). The first is a one-paragraph lemma; the second is the \style{Clearing Invariance Theorem}, which asserts that clearing is preserved under any strict, finite-limit-preserving change of coefficient category. Together they say that clearing is a \style{natural} construction.

\begin{lemma}[Fixed-Base Functoriality]
\label{lem:fixed-base-functoriality}
Let $\sheaf{L}, \sheaf{M}: \Inc(\hyper_G) \to \datacat$ be liability sheaves over the same liability hypergraph. Every natural transformation $\alpha:\sheaf{L}\Rightarrow\sheaf{M}$ induces a canonical morphism
\[
H^0(\alpha):\ H^0(\hyper_G;\sheaf{L})\ \longrightarrow\ H^0(\hyper_G;\sheaf{M})
\]
in $\datacat$, and the assignment $\alpha\mapsto H^0(\alpha)$ is functorial in $\alpha$.
\end{lemma}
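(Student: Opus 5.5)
The plan is to obtain $H^0(\alpha)$ directly from the universal property of the limit, using that $H^0(\hyper_G;-)=\lim_{\Inc(\hyper_G)}$ is being evaluated on two diagrams of the same finite shape. Write $\pi^{\sheaf{L}}_x:H^0(\hyper_G;\sheaf{L})\to\stalk{\sheaf{L}}{x}$ for the limit projections, with $x$ ranging over the objects $V_{\hyper}\sqcup E_{\hyper}$ of $\Inc(\hyper_G)$. The limit exists because $\Inc(\hyper_G)$ is finite and $\datacat$ has finite limits (axiom (1) of Definition~\ref{def:liability-category}). Compose the limit cone with the components of $\alpha$ to get the family
\[
\alpha_x\circ\pi^{\sheaf{L}}_x:\ H^0(\hyper_G;\sheaf{L})\to\stalk{\sheaf{M}}{x}.
\]

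The first step is to check that this family is a cone over $\sheaf{M}$. For each generating arrow $\phi:h\to v$ of $\Inc(\hyper_G)$, naturality gives $\sheaf{M}(\phi)\circ\alpha_h=\alpha_v\circ\sheaf{L}(\phi)$. Combined with the cone identity $\sheaf{L}(\phi)\circ\pi^{\sheaf{L}}_h=\pi^{\sheaf{L}}_v$, this yields
\[
\sheaf{M}(\phi)\circ\alpha_h\circ\pi^{\sheaf{L}}_h=\alpha_v\circ\pi^{\sheaf{L}}_v .
\]
There are no nontrivial composites in $\Inc(\hyper_G)$, and identities are automatic, so checking the generating arrows suffices. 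The universal property of $\lim\sheaf{M}$ then gives a unique $H^0(\alpha)$ with $\pi^{\sheaf{M}}_x\circ H^0(\alpha)=\alpha_x\circ\pi^{\sheaf{L}}_x$ for all $x$. This is what ``canonical'' means here.

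Functoriality follows from the uniqueness clause alone. The identity morphism satisfies the defining equations for $\alpha=\id_{\sheaf{L}}$, so $H^0(\id)=\id$. For $\alpha:\sheaf{L}\Rightarrow\sheaf{M}$ and $\gamma:\sheaf{M}\Rightarrow\sheaf{N}$, we compute
\[
\pi^{\sheaf{N}}_x\circ H^0(\gamma)\circ H^0(\alpha)=\gamma_x\circ\pi^{\sheaf{M}}_x\circ H^0(\alpha)=\gamma_x\circ\alpha_x\circ\pi^{\sheaf{L}}_x ,
\]
so $H^0(\gamma)\circ H^0(\alpha)$ satisfies the defining equations for $\gamma\circ\alpha$ and therefore equals $H^0(\gamma\circ\alpha)$.

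I do not expect a real obstacle, since this is the standard functoriality of $\lim$ on the functor category $[\Inc(\hyper_G),\datacat]$. The only point needing care is the bookkeeping of which object is assigned to each stalk. As an optional complement, the map can be made explicit through Proposition~\ref{prop:limit-equalizer}. The components $\alpha_v$ assemble to a map $P_{\sheaf{L}}\to P_{\sheaf{M}}$ that intertwines the two clearing operators $\Phi$. Restricting this map to the equalizers recovers $H^0(\alpha)$, because both maps satisfy the same universal characterization.
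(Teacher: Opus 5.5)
Your proposal is correct and follows essentially the same route as the paper: $H^0(\alpha)$ is the universal morphism between limit cones obtained by composing the limit cone of $\sheaf{L}$ with the components of $\alpha$, and functoriality follows from the uniqueness clause of the universal property. The paper compresses this to a single sentence, whereas you write out the cone verification on generating arrows and the identity and composition checks explicitly; your optional equalizer description via the intertwining map $P_{\sheaf{L}}\to P_{\sheaf{M}}$ is also sound.
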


{\em Proof:} The global-section object is the limit $\lim_{\Inc(\hyper_G)}\sheaf{L}$ (Proposition \ref{prop:limit-equalizer} identifies it with an equalizer). The $\lim$-functor on the diagram category $\datacat^{\Inc(\hyper_G)}$ sends a natural transformation to the universal morphism between limit cones; functoriality is the universal property. \qed

We now turn to the cross-category direction. Let $(\datacat,\mathcal S,\beta)$ and $(\datacat',\mathcal S',\beta')$ be liability categories.

\begin{definition}[Morphism of Liability Categories]
\label{def:morphism-liability-category}
A \style{morphism of liability categories} $F:(\datacat,\mathcal S,\beta)\to(\datacat',\mathcal S',\beta')$ consists of:
\begin{enumerate}
\item A functor $F:\datacat\to\datacat'$ that preserves finite limits.
\item For each object $\payob\in\datacat$ and each global element $\rho:1\to\payob$, a monomorphism
\[
\phi_{\payob,\rho}:\ F\bigl(\beta_{\payob}(\rho)\bigr)\ \hookrightarrow\ \beta'_{F\payob}\!\bigl(F\rho\bigr)
\]
in $\datacat'/F\payob$, where $F\rho:1'\to F\payob$ is identified via the canonical isomorphism $1'\cong F1$ provided by terminal-object preservation. We call $\phi$ the \style{compatibility comparison} for $F$.
\end{enumerate}
The morphism is \style{strict} if every $\phi_{\payob,\rho}$ is an isomorphism, and \style{lax} otherwise.
\end{definition}

\begin{remark}[Role of the Compatibility Comparison]
\label{rem:phi-role}
The compatibility comparison $\phi$ is what lets a functor between ambient categories interact coherently with the constraint structure that defines payment bounds. Strict morphisms are the typical case: full-subcategory inclusions $\datacat'\hookrightarrow\datacat$ respecting the constraint classes, and forgetful functors for which the target constraint class is pulled back from the source. Laxness arises when the target category admits a strictly richer notion of constraint than the source, so that $\beta'$-constraints can properly contain $F(\beta\text{-constraints})$.
\end{remark}

\begin{example}[Forgetful Functor $\mathbf{DCPO}\to\mathbf{Pos}$]
\label{ex:dcpo-to-pos}
The forgetful functor $U:\mathbf{DCPO}\to\mathbf{Pos}$ preserves finite limits: the terminal object is the one-point dcpo, products are componentwise dcpos, and the equalizer of Scott-continuous maps is the sub-dcpo on which they agree, whose underlying poset is the $\mathbf{Pos}$-equalizer. For a dcpo $P$ and global element $\rho\in P$, the $\mathbf{DCPO}$-constraint is the Scott-closed principal ideal $\downarrow\!\rho$, and the $\mathbf{Pos}$-constraint (in the enlarged class of all down-sets; see Remark~\ref{rem:pullback-stability}) is again $\downarrow\!\rho$. The comparison $\phi_{P,\rho}$ is the identity, so $U$ is strict. By Clearing Invariance (Theorem~\ref{thm:clearing-invariance}), passing from a Scott-continuous clearing model to its underlying monotone model leaves the global-section object unchanged.
\end{example}

The following theorem shows that clearing is invariant under change of coefficient category.  The conceptual core is short: since $H^0$ is a finite limit and $F$ preserves finite limits, $F$ transports the global-section object on the nose; the substantive content of the proof is the verification that the constraint subobjects $\beta(\liab)$ and the curried partial aggregator $\col_v=\widehat\col_v\circ(\iota_v\times\id)$ transport coherently along $F$.  Lossy comparisons -- scalarization, expectation, projection, generic risk measures -- typically fail this coherence; for lax morphisms, Remark~\ref{rem:lax-invariance} produces a canonical comparison morphism in place of an isomorphism.

\begin{theorem}[Clearing Invariance]
\label{thm:clearing-invariance}
Let $F:(\datacat,\mathcal S,\beta)\to(\datacat',\mathcal S',\beta')$ be a strict morphism of liability categories, and let $N=(G,\payob,\liab,\iota;\dis,\widehat\col)$ be a liability network in $\datacat$ with liability sheaf $\sheaf{L}$. The \style{pushforward network} $F_\ast N=(G, F\payob, F_\ast\liab, F_\ast\iota;\, F_\ast\dis, F_\ast\widehat\col)$ in $\datacat'$, defined by
\[
(F_\ast\payob)_v\ =\ F\payob_v,\qquad
(F_\ast\liab)_e\ =\ F\liab_e\circ\tau,\qquad
(F_\ast\iota)_v\ =\ F\iota_v\circ\tau,
\]
\[
(F_\ast\dis)_e\ =\ \phi_{\payob_{s(e)},\liab_e}\circ F\dis_e,\qquad
(F_\ast\widehat\col)_v\ =\ F\widehat\col_v\circ\kappa_v^{-1}\circ(\id_{F\payob_v}\times\psi_v^{-1}),
\]
is a liability network in $\datacat'$, where $\tau:1'\xrightarrow{\sim}F1$ is the canonical isomorphism induced by terminal-object preservation,
\[
\kappa_v:\ F\!\bigl(\payob_v\times\!\!\prod_{t(e)=v}\!\beta_{\payob_{s(e)}}(\liab_e)\bigr)\ \xrightarrow{\sim}\ F\payob_v\times F\!\!\prod_{t(e)=v}\!\beta_{\payob_{s(e)}}(\liab_e)
\]
is the canonical binary product comparison induced by finite-limit preservation of $F$, and
\[
\psi_v:\ F\!\!\prod_{t(e)=v}\!\beta_{\payob_{s(e)}}(\liab_e)\ \xrightarrow{\sim}\ \prod_{t(e)=v}\!\beta'_{F\payob_{s(e)}}\!\bigl((F_\ast\liab)_e\bigr)
\]
is the canonical isomorphism assembled from finite-limit preservation of $F$ and the strict comparisons $\phi$. Let $\sheaf{L}'$ denote the liability sheaf of $F_\ast N$, with total payment object $P'=\prod_{v\in V}F\payob_v$ and clearing operator $\Phi':P'\to P'$, and write $\chi_P:F\bigl(\prod_{v\in V}\payob_v\bigr)\xrightarrow{\sim}P'$ for the canonical product comparison induced by finite-limit preservation of $F$. Then there is a canonical isomorphism in $\datacat'$
\[
F\bigl(H^0(\hyper_G;\sheaf{L})\bigr)\ \cong\ H^0(\hyper_G;\sheaf{L}'),
\]
under which the clearing operators correspond in the sense that
\[
\Phi'\circ\chi_P\ =\ \chi_P\circ F\Phi.
\]
\end{theorem}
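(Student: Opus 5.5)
The plan is to route everything through the equalizer presentation of Proposition~\ref{prop:limit-equalizer}, applied once in $\datacat$ and once in $\datacat'$.

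\textbf{Step 1: $F_\ast N$ is a liability network.} The pushforward data are well typed. $(F_\ast\liab)_e=F\liab_e\circ\tau$ is a global element of $F\payob_{s(e)}$. The composite $\phi_{\payob_{s(e)},\liab_e}\circ F\dis_e$ lands in $\beta'_{F\payob_{s(e)}}((F_\ast\liab)_e)$. The aggregator $(F_\ast\widehat\col)_v$ has the correct domain $F\payob_v\times\prod_{t(e)=v}\beta'(\cdots)$, because $\kappa_v$ and $\psi_v$ are isomorphisms. Here $\psi_v$ is built from the product comparison $F\prod\beta\cong\prod F\beta$ followed by $\prod\phi$, and each $\phi$ is invertible by strictness.

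\textbf{Step 2: the partial aggregators transport.} I compute the pushforward partial aggregator
\[
\col'_v=(F_\ast\widehat\col)_v\circ\bigl((F_\ast\iota)_v\times\id\bigr),
\]
and aim to show
\[
\col'_v=F\col_v\circ\psi_v^{-1}.
\]
The argument is naturality of the product comparison $\kappa_v$ with respect to $F(\iota_v\times\id)$, together with the identification $1'\times X\cong X$ via $\tau$. Concretely, $\kappa_v\circ F(\iota_v\times\id)$ equals $(F\iota_v\times\id)\circ\kappa'$, where $\kappa'$ is the comparison for $F(1\times\prod\beta)$. One then checks that the unit isomorphisms match under $F$, using that $F$ preserves the terminal object. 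I expect this bookkeeping of coherence isomorphisms to be the main obstacle. I would handle it by testing every equation after composing with product projections: $F$ preserves the projections, and the comparisons are defined by the projections, so each equation reduces to a componentwise identity.

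\textbf{Step 3: $F$ carries $D,A,\Phi$ to $D',A',\Phi'$.} Write $\chi_B:F B\to B'$ for the composite of the product comparison with $\prod_e\phi$; it is an isomorphism. I claim
\[
D'\circ\chi_P=\chi_B\circ FD,\qquad A'\circ\chi_B=\chi_P\circ FA.
\]
Both are checked componentwise after projecting to $B'$ or $P'$. For $D'$ the $e$-component reduces to $\phi\circ F\dis_e\circ F\pi_{s(e)}$. For $A'$ it reduces to $F\col_v$ composed with $F$ of the relevant projection, using Step~2 and the fact that $\psi_v$ is compatible with $\chi_B$ under restriction to the edges $e$ with $t(e)=v$. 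Composing the two identities gives
\[
\Phi'\circ\chi_P=A'\circ\chi_B\circ FD=\chi_P\circ FA\circ FD=\chi_P\circ F\Phi.
\]

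\textbf{Step 4: conclude.} $F$ preserves equalizers, so $F\,\Eq(\id_P,\Phi)\cong\Eq(F\id_P,F\Phi)$. Conjugating by the isomorphism $\chi_P$ and using Step~3 gives $\Eq(F\id_P,F\Phi)\cong\Eq(\id_{P'},\Phi')$, because isomorphic parallel pairs have isomorphic equalizers. Proposition~\ref{prop:limit-equalizer} identifies the two ends with $F H^0(\hyper_G;\sheaf{L})$ and $H^0(\hyper_G;\sheaf{L}')$. Canonicity holds because every map used is a canonical limit comparison or one of the given $\phi$'s.
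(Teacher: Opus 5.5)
Your proof is correct, but it runs the logic in the opposite direction from the paper. The paper first builds a natural isomorphism of diagrams $F\circ\sheaf{L}\cong\sheaf{L}'$ over $\Inc(\hyper_G)$. It matches the stalks, using $\phi$ at payment vertices and $\psi_v$ at collection hyperedges, and all four families of restriction maps. It then gets $F(H^0(\hyper_G;\sheaf{L}))\cong H^0(\hyper_G;\sheaf{L}')$ directly from preservation of the finite limit $\lim_{\Inc(\hyper_G)}$. Only afterwards does it verify $\Phi'\circ\chi_P=\chi_P\circ F\Phi$, as a separate componentwise check.

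You instead make the intertwining identities $D'\circ\chi_P=\chi_B\circ FD$ and $A'\circ\chi_B=\chi_P\circ FA$ the key lemma. You then deduce the $H^0$ isomorphism from them, using preservation of a single equalizer, conjugation by $\chi_P$, and Proposition~\ref{prop:limit-equalizer} in both $\datacat$ and $\datacat'$. The computational core is shared: your Step~2 is exactly the paper's identity $F\col_v=(F_\ast\col)_v\circ\psi_v$. You actually justify it (naturality of $\kappa_v$ plus terminal preservation, tested on projections), where the paper only says ``direct computation''.

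\emph{What your route buys.} You never check naturality of a diagram isomorphism, and you use only preservation of finite products and one equalizer. The isomorphism on $H^0$ is literally the restriction of $\chi_P$, so the clause ``under which the clearing operators correspond'' is built into the construction rather than verified afterwards. Your two identities also give for free that $\chi_B$ intertwines $F(D\circ A)$ with $D'\circ A'$. Hence the institution-edge duality of Proposition~\ref{prop:edge-side-equalizer} transports along $F$ as well.

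\emph{What the paper's route buys.} It yields the stronger, stalkwise statement $F\circ\sheaf{L}\cong\sheaf{L}'$, which does not depend on the equalizer presentation. That statement transports the limit of any sub-diagram of $\Inc(\hyper_G)$, e.g.\ sections over sub-hypergraphs. Yours identifies only the total global-section object.
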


{\em Proof:}
We show $F\circ\sheaf{L}\cong\sheaf{L}'$ as functors $\Inc(\hyper_G)\to\datacat'$, then invoke limit preservation.

\style{Stalks.}  At an institution vertex $v\in V$,
$(F\sheaf{L})(v)=F\payob_v=\sheaf{L}'(v)$.
At a distribution hyperedge $h_v^{\dis}$,
$(F\sheaf{L})(h_v^{\dis})=F\payob_v=\sheaf{L}'(h_v^{\dis})$.
At a payment vertex $e^*$,
$(F\sheaf{L})(e^*)=F\bigl(\beta_{\payob_{s(e)}}(\liab_e)\bigr)$ and $\sheaf{L}'(e^*)=\beta'_{F\payob_{s(e)}}\!\bigl((F_\ast\liab)_e\bigr)$; by strictness, $\phi_{\payob_{s(e)},\liab_e}$ is an isomorphism identifying them.
At a collection hyperedge $h_v^{\col}$, finite-limit preservation gives $F\!\prod\beta\cong\prod F\beta$, and strictness gives $\prod F\beta\cong\prod\beta'$; their composite is the iso $\psi_v$ of the statement, identifying $(F\sheaf{L})(h_v^{\col})$ with $\sheaf{L}'(h_v^{\col})$.

\style{Restrictions.}  The identity restrictions $\sheaf{L}(h_v^{\dis}\to v)=\id_{\payob_v}$ are preserved by any functor. The distributor restrictions $\sheaf{L}(h_v^{\dis}\to e^*)=\dis_e$ push to $F\dis_e$, which equals $(F_\ast\dis)_e$ after post-composing with $\phi$; this is exactly the restriction $\sheaf{L}'(h_v^{\dis}\to e^*)$ under the stalk identifications. The projection restrictions $\sheaf{L}(h_v^{\col}\to e^*)=\pi_e$ are preserved because $F$ preserves products and $\psi_v$ is assembled from these projections. The aggregator restrictions $\sheaf{L}(h_v^{\col}\to v)=\col_v$ push to $F\col_v$.  The partial aggregator $(F_\ast\col)_v$ of the pushforward network, derived from $(F_\ast\widehat\col)_v$ and $(F_\ast\iota)_v$ as in Definition~\ref{def:liability-graph}, satisfies $F\col_v=(F_\ast\col)_v\circ\psi_v$ by direct computation from $\col_v=\widehat\col_v\circ(\iota_v\times\id)$, matching $\sheaf{L}'(h_v^{\col}\to v)$ after the identification.  Naturality of the resulting iso $F\sheaf{L}\cong\sheaf{L}'$ is immediate from the componentwise construction.

\style{Limit preservation.}  Since $F$ preserves finite limits and $\Inc(\hyper_G)$ is a finite category,
\[
F\bigl(H^0(\hyper_G;\sheaf{L})\bigr)
\ =\ F\bigl(\lim_{\Inc(\hyper_G)}\sheaf{L}\bigr)
\ \cong\ \lim_{\Inc(\hyper_G)}(F\sheaf{L})
\ \cong\ \lim_{\Inc(\hyper_G)}\sheaf{L}'
\ =\ H^0(\hyper_G;\sheaf{L}').
\]
For the operator equation, $F\Phi:FP\to FP$ and $\Phi':P'\to P'$ are intertwined by $\chi_P$ on the nose: at each component $v$, $\chi_P\circ F\Phi$ projects through the $v$-th component of $\chi_P$ to $F\col_v\circ F\!\!\prod\dis_e$, while $\Phi'\circ\chi_P$ does so to $(F_\ast\col)_v\circ\prod (F_\ast\dis)_e$, and the definitions $(F_\ast\col)_v=F\col_v\circ\psi_v^{-1}$ and $(F_\ast\dis)_e=\phi_{\payob_{s(e)},\liab_e}\circ F\dis_e$ together with strictness identify the two. 
The statement on $\Fix(\Phi_\ast)$ then follows by applying $\Hom_{\datacat'}(1',-)$ to both sides and using that $F$ carries the equalizer of $(\id_P,\Phi)$ in $\datacat$ to the equalizer of $(\id_{P'},\Phi')$ in $\datacat'$ (equalizers being finite limits).
\qed

\begin{corollary}[Clearing Configurations]
\label{cor:clearing-configs}
Under the hypotheses of Theorem~\ref{thm:clearing-invariance}, the canonical map
\[
\Hom_\datacat(1,-)\ \longrightarrow\ \Hom_{\datacat'}(1',F(-))
\]
restricts to a map $\Gamma(\hyper_G;\sheaf{L})\to\Gamma(\hyper_G;\sheaf{L}')$. If $F$ is \style{global-element bijective}, meaning that the canonical map $\Hom_\datacat(1,X)\to\Hom_{\datacat'}(1',FX)$ is a bijection for $X=H^0(\hyper_G;\sheaf{L})$, then this restriction is a bijection
\[
\Gamma(\hyper_G;\sheaf{L})\ \xrightarrow{\ \sim\ }\ \Gamma(\hyper_G;\sheaf{L}'),
\]
and accordingly $\Fix(\Phi_\ast)\cong\Fix(\Phi'_\ast)$ as sets.
\end{corollary}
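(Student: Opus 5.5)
The proof combines the canonical isomorphism $\theta: F(H^0(\hyper_G;\sheaf{L}))\xrightarrow{\sim} H^0(\hyper_G;\sheaf{L}')$ from Theorem~\ref{thm:clearing-invariance} with the action of $F$ on global elements. For any object $X$ of $\datacat$, the canonical map sends $x:1\to X$ to $Fx\circ\tau:1'\to FX$, where $\tau:1'\xrightarrow{\sim}F1$ is the terminal comparison. Take $X=H^0(\hyper_G;\sheaf{L})$ and post-compose with $\theta$. This gives a map
\[
\Gamma(\hyper_G;\sheaf{L})=\Hom_\datacat(1,H^0)\longrightarrow \Hom_{\datacat'}(1',F H^0)\xrightarrow{\ \theta\circ-\ }\Hom_{\datacat'}(1',H^0(\hyper_G;\sheaf{L}'))=\Gamma(\hyper_G;\sheaf{L}').
\]
This is the claimed restriction. ``Restricts'' should be read as saying that the canonical map, applied to the global-section object, lands in global sections of $\sheaf{L}'$ once composed with the canonical iso; no further verification is needed.

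For the bijection, post-composition with the isomorphism $\theta$ is a bijection of hom-sets. If the canonical map $\Hom_\datacat(1,X)\to\Hom_{\datacat'}(1',FX)$ is bijective at $X=H^0$, then the composite above is a bijection $\Gamma(\hyper_G;\sheaf{L})\cong\Gamma(\hyper_G;\sheaf{L}')$.

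For the fixed-point statement, apply Proposition~\ref{prop:limit-equalizer} on both sides: $H^0\cong\Eq(\id_P,\Phi)$ and $H^0(\hyper_G;\sheaf{L}')\cong\Eq(\id_{P'},\Phi')$. By the universal property of the equalizer, global elements of $\Eq(\id_P,\Phi)$ are exactly the $x:1\to P$ with $\Phi\circ x=x$, which is $\Fix(\Phi_\ast)$, and likewise for $\Phi'$ (Remark~\ref{rem:three-formulations}). Composing these identifications with the bijection on $\Gamma$ gives $\Fix(\Phi_\ast)\cong\Fix(\Phi'_\ast)$.

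I would also check that this bijection is the concrete one, $x\mapsto \chi_P\circ Fx\circ\tau$. Using $\Phi'\circ\chi_P=\chi_P\circ F\Phi$ from Theorem~\ref{thm:clearing-invariance}, a fixed point $x$ of $\Phi_\ast$ maps to a fixed point of $\Phi'_\ast$:
\[
\Phi'\chi_P Fx\,\tau=\chi_P F(\Phi x)\,\tau=\chi_P Fx\,\tau .
\]

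The only delicate point is keeping the various canonical comparisons compatible: $\tau$, $\chi_P$, and the way $\theta$ restricts to the equalizer inclusions. I would verify this by noting that $F$ carries the equalizer inclusion $\Eq(\id_P,\Phi)\hookrightarrow P$ to an equalizer of $(\id_{FP},F\Phi)$, and that $\chi_P$ carries this to $\Eq(\id_{P'},\Phi')$ by the intertwining identity. The square relating $\theta$, $\chi_P$, and the two inclusions then commutes by uniqueness of factorizations through equalizers. The bijectivity hypothesis is used only at $X=H^0$, as stated, so nothing about $P$ itself is required.
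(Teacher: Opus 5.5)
Your proof is correct and follows the paper's own argument. Both apply global elements to the isomorphism of Theorem~\ref{thm:clearing-invariance}, compose the canonical map at $X=H^0(\hyper_G;\sheaf{L})$ (bijective by hypothesis) with post-composition by that isomorphism, and identify both sides with fixed-point sets via Proposition~\ref{prop:limit-equalizer}. Your extra check that the resulting bijection is $x\mapsto\chi_P\circ Fx\circ\tau$, using $\Phi'\circ\chi_P=\chi_P\circ F\Phi$, makes explicit a compatibility the paper leaves implicit.
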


{\em Proof:} Apply $\Hom_\datacat(1,-)$ and $\Hom_{\datacat'}(1',-)$ to the isomorphism of Theorem~\ref{thm:clearing-invariance}. Naturality of the comparison $\Hom_\datacat(1,X)\to\Hom_{\datacat'}(1',FX)$ in $X$ gives the asserted restriction; bijectivity at $X=H^0(\hyper_G;\sheaf{L})$ promotes the restriction to a bijection. 
The identification with $\Fix(\Phi_\ast)$, $\Fix(\Phi'_\ast)$ is Proposition~\ref{prop:limit-equalizer}. \qed

\begin{remark}[Lax Case]
\label{rem:lax-invariance}
When $F$ is a lax (non-strict) morphism, the monomorphisms $\phi_{\payob,\rho}$ need not be isomorphisms, and the product isomorphism $\psi_v$ is only a monomorphism. The pushforward network $F_\ast N$ is then not canonically defined in $\datacat'$: the aggregator $F\col_v$ lives on $F\prod\beta$, a proper subobject of $\prod\beta'$, and any extension to the full domain $\prod\beta'$ must be chosen as additional data. For any such choice making the pushforward a liability network, Lemma \ref{lem:fixed-base-functoriality} produces a canonical morphism 
\[
F\bigl(H^0(\hyper_G;\sheaf{L})\bigr)\ \longrightarrow\ H^0(\hyper_G;\sheaf{L}'),
\]
but this is an isomorphism only in the strict case.
\end{remark}

\begin{remark}[Applications of Clearing Invariance]
\label{rem:invariance-applications}
Theorem~\ref{thm:clearing-invariance} applies uniformly to every strict comparison of clearing semantics that arises as a finite-limit-preserving, constraint-isomorphism functor. Representative cases include:
\begin{itemize}
\item \style{Recovery of Eisenberg--Noe.} The bounded Eisenberg--Noe model may be treated directly as a $\mathbf{Pos}$-valued liability sheaf with payment objects $[0,\bar\ell_v]$ and constraint subobjects $[0,\ell_e]\hookrightarrow[0,\bar\ell_v]$.  Its inclusion into the unbounded presentation with payment objects $[0,\infty]$ induces an isomorphism on $H^0$, since the clearing operator carries $\prod_v[0,\infty]$ into $\prod_v[0,\bar\ell_v]$ after a single iteration (Remark~\ref{rem:EN-bounded-presentation}).  Thus the classical scalar Eisenberg--Noe model is the global-element shadow of either presentation (Corollary~\ref{cor:clearing-configs}).
\item \style{Change of basis.} Currency redenomination by a positive scalar acts as an order-isomorphism on each interval payment object and assembles into a strict equivalence of liability categories; clearing sections transport canonically and commute with redenomination.
\end{itemize}
Each application is a single invocation of the theorem; no separate proof is required.  The scope of Clearing Invariance is exactly change of coefficient category over a \emph{fixed} base hypergraph: it is not a network-morphism theorem.  Other natural changes of semantics -- subnetwork restriction, vertex merging, multilateral netting, and central-counterparty interposition -- change the base hypergraph itself, and require comparison maps (functors, spans, or profunctors) between clearing-section lattices of related hypergraphs rather than an isomorphism over a single base; these are taken up in forthcoming work.
\end{remark}

\section{Current Clearing Models as Liability Sheaves}
\label{sec:current}

We now show that Eisenberg--Noe and lattice liability networks arise as special cases, obtained by specializing the choice of liability category and the distributor and aggregator morphisms.  The fixed-point methods of Section~\ref{sec:fixed-point} apply uniformly to both.

\subsection{Classical Eisenberg-Noe}
\label{ssec:E-N}

We demonstrate how the classical Eisenberg-Noe model emerges as a special case of our categorical framework. This example serves both to validate our abstraction and to illustrate the concrete mechanics of the sheaf construction.

Consider a financial network with institutions $V = \{1, 2, \ldots, n\}$ and directed edges $E$ representing debt obligations. Each institution $i$ has external assets $c_i \geq 0$ and owes nominal amounts $\ell_{ij} > 0$ to other institutions; throughout this subsection, $E$ contains exactly those pairs $(i,j)$ with strictly positive nominal liability, so that every edge $e\in E$ has $\ell_e>0$.  We model this system as follows.

\textbf{Data Category.} We work in the category $\mathbf{Pos}$ of partially ordered sets with monotone maps. Each payment object $\payob_v = [0, \infty]$ is a complete lattice object in the sense of Definition~\ref{def:complete-lattice-object} below: $\Hom(1, \payob_v) = \payob_v$ is a complete lattice. The category $\mathbf{Pos}$ satisfies the requirements of Definition~\ref{def:liability-category}: the terminal object is the one-element poset $1 = \{\star\}$, products are Cartesian products with componentwise order, the distinguished constraint class consists of all down-sets (Remark~\ref{rem:pullback-stability}), and equalizers are sub-posets with induced order. The bound selector $\beta$ produces the principal down-sets $\payob_{s(e)}^{\lambda_e} = \{x \in \payob_{s(e)} : x \leqslant \ell_e\}$ used as constraint subobjects at payment vertices. The order-theoretic fixed-point arguments (via Tarski) take place within the complete lattice structure of $\prod_{v \in V} \payob_v$.

\textbf{Liability Network.} We define $(G, \payob, \liab, \iota)$ where $G = (V, E, s, t)$ is the directed graph of debt relationships, each payment object $\payob_v = [0,\infty]$ as fixed in the Data Category above, each liability morphism $\liab_e: 1 \to [0,\infty]$ maps $\star \mapsto \ell_e$ (the nominal liability along edge $e$), and each exogenous resource morphism $\iota_v: 1 \to [0,\infty]$ maps $\star \mapsto c_v$ (the external assets of institution $v$).

\textbf{Liability Sheaf.} The sheaf $\sheaf{L}: \Inc(\hyper_G) \to \mathbf{Pos}$ assigns stalks:
\begin{itemize}
\item $\stalk{\sheaf{L}}{v} = [0,\infty]$ for each institution $v$
\item $\stalk{\sheaf{L}}{e^*} = [0, \ell_e]$ for each payment vertex $e^*$
\item $\stalk{\sheaf{L}}{h_v^{\dis}} = [0,\infty]$ for each distribution hyperedge
\item $\stalk{\sheaf{L}}{h_v^{\col}} = \prod_{t(e) = v} [0, \ell_e]$ for each collection hyperedge
\end{itemize}

For the morphisms, let $\bar{\ell}_v = \sum_{s(e) = v} \ell_e$ denote the total liabilities of institution $v$. When $\bar{\ell}_v = 0$, institution $v$ has no outgoing liabilities and we set $\dis_e \equiv 0$ for all $e$ with $s(e) = v$. When $\bar{\ell}_v > 0$, the distributor $\dis_e: [0,\infty] \to \payob_{s(e)}^{\lambda_e} = [0, \ell_e]$ implements proportional payment with respect to nominal liability bounds:
\[
\dis_e(x) = \min\left\{ x \cdot \frac{\ell_e}{\bar{\ell}_v}, \ell_e \right\}
\]

The aggregator $\col_v: \prod_{t(e) = v} [0, \ell_e] \to [0,\infty]$ is defined as:
\[
\col_v\left((p_e)_{t(e) = v}\right) = \widehat{\col}_v\left(c_v, (p_e)_{t(e) = v}\right) = \min\left\{ \bar{\ell}_v, c_v + \sum_{t(e) = v} p_e \right\}
\]
where $c_v = \iota_v(\star)$ are the external assets and $\widehat{\col}_v(r, (p_e)) = \min\{\bar{\ell}_v, r + \sum p_e\}$ is the underlying aggregation.
When $t^{-1}(v) = \emptyset$, the empty sum equals zero and $\col_v(\star) = \min\{\bar{\ell}_v, c_v\}$.

\textbf{Clearing Characterization.} By Theorem \ref{thm:clearing-sections}, a global section corresponds to institution payments $x_v$ and edge payments $p_e$ satisfying:
\[
p_e = \min\left\{ x_{s(e)} \cdot \frac{\ell_e}{\bar{\ell}_{s(e)}}, \ell_e \right\} 
\qquad
x_v = \min\left\{ \bar{\ell}_v, c_v + \sum_{t(e) = v} p_e \right\} 
\]
with substitution yielding the classical Eisenberg-Noe clearing equation:
\[
x_v = \min\left\{ \bar{\ell}_v, c_v + \sum_{t(e) = v} \min\left\{ x_{s(e)} \cdot \frac{\ell_e}{\bar{\ell}_{s(e)}}, \ell_e \right\} \right\}
\]
The outer minimum enforces limited liability while the inner minimum ensures individual payments respect nominal bounds.

\textbf{Existence via Tarski.} All morphisms are monotone on complete lattices, so existence of clearing sections is guaranteed by the Tarski Fixed Point Theorem (see Theorem \ref{thm:lattice-existence} below). Therefore $\Gamma(\hyper_G; \sheaf{L})$ is non-empty and forms a complete lattice, with the greatest element corresponding to the maximal clearing vector.

\textbf{Computational Aspects.} The operator
$\Phi:\prod_v[0,\infty]\to\prod_v[0,\infty]$ from
Theorem~\ref{thm:lattice-existence} takes the form
\[
[\Phi(\mathbf{x})]_v
= \min\left\{\bar{\ell}_v,\,
  c_v+\sum_{t(e)=v}
  \min\left\{x_{s(e)}\cdot
  \frac{\ell_e}{\bar{\ell}_{s(e)}},\,\ell_e\right\}
  \right\}.
\]
The outer $\min\{\bar\ell_v,\,\cdot\,\}$ shows that $\Phi$
carries $\prod_v[0,\infty]$ into the bounded complete
sublattice $\prod_v[0,\bar\ell_v]$, so clearing dynamics
are confined to this sublattice after a single
application of $\Phi$. Restricted to it, $\Phi$ is a
composition of piecewise-linear monotone functions on a
compact complete lattice and therefore preserves both
directed suprema and filtered infima.  Kleene
iteration (Remark~\ref{rem:iteration}) therefore converges
in both directions: starting from
$\mathbf{x}^{(0)}=\mathbf{0}$ yields an increasing
sequence converging to the least clearing vector, while
starting from
$\mathbf{x}^{(0)}=(\bar{\ell}_v)_{v\in V}$ yields a
decreasing sequence converging to the greatest clearing
vector.

\begin{remark}[Bounded versus Unbounded Presentations]
\label{rem:EN-bounded-presentation}
The body's choice $\payob_v=[0,\infty]$ is one of two natural presentations of the Eisenberg--Noe model.  The \style{bounded presentation} takes $\payob_v=[0,\bar\ell_v]$ in the smallest full subcategory $\datacat_{\mathrm{EN}}\subseteq\mathbf{Pos}$ closed under finite limits and containing $\{[0,\bar\ell_v]:v\in V\}$ and $\{[0,\ell_e]:e\in E\}$; in this presentation the aggregator $\widehat\col_v(c_v,(p_e))=\min\{\bar\ell_v,\,c_v+\sum p_e\}$ lands directly in the payment object and the outer truncation is absorbed into the codomain.  Because the exogenous resource morphism $\iota_v:1\to\payob_v$ must take values in $[0,\bar\ell_v]$, the bounded presentation uses $\iota_v(\star)=\min\{c_v,\bar\ell_v\}$ in place of $c_v$; the outer truncation in $\widehat\col_v$ makes this substitution invisible at the level of $\Phi$ and of global sections.  The body's \style{unbounded presentation} is more uniform -- the payment object $[0,\infty]$ does not depend on the network data -- but requires the outer $\min$ in $\Phi$ to enforce the bound. Because $\Phi$ stabilizes $\prod_v[0,\bar\ell_v]\subseteq\prod_v[0,\infty]$ after a single iteration (paragraph above), the two presentations have canonically isomorphic global-section objects: every clearing section of the unbounded model already lies in the bounded sublattice, and the natural inclusion of liability sheaves (Lemma~\ref{lem:fixed-base-functoriality}) restricts to an isomorphism on $H^0$. The Recovery of Eisenberg--Noe application of Remark~\ref{rem:invariance-applications} invokes the bounded presentation.
\end{remark}

\begin{remark}[The Sheaf Framework and Classical Theory]
The bounded stalks $\payob_{s(e)}^{\lambda_e} = [0, \ell_e]$ at payment vertices automatically ensure edge payments never exceed nominal liabilities, eliminating explicit constraints in the clearing equations.  Though each institution's clearing condition is purely local -- balancing inflows and outflows -- the sheaf structure enforces global consistency through compatibility at hyperedges.

The pointwise ordering on clearing vectors has a natural interpretation: $\mathbf{x} \leqslant \mathbf{y}$ means $x_v \leqslant y_v$ for all $v$, so higher vectors represent less default.  The least clearing section corresponds to maximal default propagation, the greatest to optimal recovery.  This lattice structure follows from Tarski's theorem applied to the monotone clearing operator, and is independent of the specific financial model.
\end{remark}

\subsection{Lattice Liability Networks}
\label{ssec:LLNs}

The lattice liability networks (LLNs) of \cite{GhristGouldLopezRiess2025} fit within our categorical framework. This example demonstrates both the power of the sheaf formulation and highlights a subtle issue regarding resource conservation.

Recall that an LLN consists of payment lattices at vertices, nominal liabilities along edges, and operators for aggregating incoming resources and distributing outgoing payments. The key insight is that these structures naturally correspond to liability sheaves when working in an appropriate category of lattices.

We regard each payment lattice $L_v$ as a complete lattice object of $\mathbf{Pos}$ in the sense of Definition~\ref{def:complete-lattice-object}: $\Hom_{\mathbf{Pos}}(1,L_v)=L_v$ inherits the lattice order and is complete. When the LLN structure maps are Scott-continuous, the same construction refines to $\mathbf{DCPO}$.

\textbf{Liability Data.} For each edge $e \in E$ with source $v = s(e)$, the liability morphism $\liab_e: 1 \to L_{s(e)}$ picks out the nominal liability element $\ell_e \in L_{s(e)}$, with liability datum $\lambda_e = \liab_e(\star) = \ell_e$. The corresponding constraint subobject is the principal ideal $L_{s(e)}^{\lambda_e} = \{x \in L_{s(e)} : x \leqslant \ell_e\}$.

\textbf{LLN as Liability Network.} Given an LLN on quiver $Q = (V, E, s, t)$, we construct a liability network $(G, \payob, \liab, \iota)$ where $G = Q$, each payment object $\payob_v$ is the complete lattice $L_v$ from the LLN, regarded as a complete lattice object of $\mathbf{Pos}$, each liability morphism $\liab_e: 1 \to L_{s(e)}$ picks out the nominal liability element $\ell_e \in L_{s(e)}$, and each exogenous resource morphism $\iota_v: 1 \to L_v$ selects the exogenous resource element in $L_v$.

\textbf{Bounded Subobjects.} For each edge $e$ with source $v = s(e)$ and nominal liability $\ell_e \in L_v$, the constraint subobject is the principal ideal: $\downarrow\!\ell_e = \{x \in L_{s(e)} : x \leqslant \ell_e\}$.
This is a complete lattice with the induced order, and the inclusion $L_{s(e)}^{\lambda_e} \hookrightarrow L_{s(e)}$ is monotone.

\textbf{Distributor Decomposition.} The original LLN framework uses a single distributor $\distrib_v: L_v \to \prod_{e \in s^{-1}(v)} L_v$. In the sheaf framework, we need individual distributor morphisms $\dis_e: L_v \to L_{s(e)}^{\lambda_e}$ for each outgoing edge $e$ where $s(e) = v$.

Let $\pi_e: \prod_{e' \in s^{-1}(v)} L_v \to L_v$ be the projection onto the component for edge $e$. Since the original framework requires $[\distrib_v(\top_v)]_e \leqslant \ell_e$ and $\distrib_v$ is monotone, the image of $\pi_e \circ \distrib_v$ lies in the principal ideal $\downarrow\!\ell_e = L_{s(e)}^{\lambda_e}$. Thus, the sheaf distributor is simply:
\[
\dis_e = \pi_e \circ \distrib_v : L_v \to L_{s(e)}^{\lambda_e}
\]
where we restrict the codomain to $L_{s(e)}^{\lambda_e}$. This defines a morphism with the required codomain in the sheaf framework.

\textbf{Aggregator Correspondence.} The LLN pay-in aggregator $\inagg_v$ has domain $\prod_{e \in t^{-1}(v)} L_{s(e)}$, while the sheaf aggregator $\col_v$ has domain $\prod_{e \in t^{-1}(v)} L_{s(e)}^{\lambda_e}$, the product of bounded stalks. With $j$ the canonical inclusion of the latter into the former, we set
\[
\col_v = \inagg_v \circ j : \prod_{e \in t^{-1}(v)} L_{s(e)}^{\lambda_e} \to L_v ,
\]
inheriting from $\inagg_v$ the implicit incorporation of exogenous resources (Remark~\ref{rem:exogenous}).

\textbf{Conservation and Factorization.} The original LLN framework imposes the conservation requirement $\outagg_v \circ \distrib_v = \id_{L_v}$, where $\outagg_v: L_v^{|s^{-1}(v)|} \to L_v$ is the pay-out aggregator. This ensures that resources distributed by $\distrib_v$ can be perfectly reconstructed by $\outagg_v$.

In the sheaf framework, this conservation property is not automatically preserved. The individual distributor morphisms $\dis_e$ extract components of the full distribution, but there is no explicit mechanism to ensure these components can be reassembled to recover the original payment state. This reveals a fundamental difference between the two frameworks:

\begin{itemize}
\item The LLN framework enforces strict resource conservation at each vertex through the factorization requirement
\item The sheaf framework enforces compatibility conditions between vertices but allows more flexibility in local resource handling
\end{itemize}

\begin{remark}[Conservation as Additional Structure]
One could impose conservation as an additional requirement on liability sheaves by demanding that for each vertex $v$, there exists a pay-out aggregator $\outagg_v: \prod_{e \in s^{-1}(v)} L_{s(e)}^{\lambda_e} \to L_v$ that serves as a left-inverse to the collective distribution map. Specifically, if we define the collective distributor $\mathcal{D}_v: L_v \to \prod_{e \in s^{-1}(v)} L_{s(e)}^{\lambda_e}$ by $\mathcal{D}_v(x) = (\dis_e(x))_{e \in s^{-1}(v)}$, then conservation requires:
\[
\outagg_v \circ \mathcal{D}_v = \id_{L_v}
\]
This condition states that no ``value'' is lost in the distribution step. However, conservation in this algebraic sense is distinct from conservation of money. Even in the classical Eisenberg--Noe model, defaulting institutions pay strictly less than they owe; the difference is a loss absorbed by creditors. The relevant economic requirement is not $\outagg_v \circ \mathcal{D}_v = \id_{L_v}$ but rather that the clearing operator be well-defined and monotone, which the sheaf framework ensures through the distributor and aggregator morphisms directly. For non-financial applications (permission networks, information flow), resources need not be conserved at all. When conservation is desired, it can be imposed as additional structure on the liability sheaf without modifying the framework's foundations.
\end{remark}

\textbf{Clearing Correspondence.} Despite dropping the conservation axiom, the sheaf framework recovers the LLN clearing \emph{equations}: clearing sections in the two frameworks are in bijection. Conservation can be imposed as additional structure on the liability sheaf (see the remark above) but is not required for the clearing conditions themselves.
A global section of the liability sheaf consists of payments $x_v \in L_v$ satisfying:
\begin{align*}
p_e &= \dis_e(x_{s(e)}) = [\distrib_{s(e)}(x_{s(e)})]_e \\
x_v &= \col_v\left((p_e)_{t(e) = v}\right) = (\inagg_v \circ j)\left((p_e)_{t(e) = v}\right)
\end{align*}

Since $[\distrib_v(x_v)]_e \leqslant \ell_e$ holds in the LLN framework, $\dis_e = \pi_e \circ \distrib_v$ lands in $L_{s(e)}^{\lambda_e}$ as required, and the equations reduce to $p_e = [\distrib_{s(e)}(x_{s(e)})]_e$ and $x_v = \inagg_v((p_e)_{t(e) = v})$ -- the LLN clearing conditions.

\begin{remark}[Role of the Sheaf Formulation]
The sheaf framework does not change the fixed-point content of the clearing problem: the operator $\Phi$ on $\prod \payob_v$ could be defined directly. What the sheaf provides is a \emph{structured decomposition} of this operator into local data (stalks, distributors, aggregators) governed by the hypergraph topology. This decomposition has several concrete consequences:
\begin{itemize}
\item \textbf{Separation of concerns:} Distribution and collection are handled by distinct hyperedges, making the flow structure explicit and constraining the class of operators under study.
\item \textbf{Automatic bounds:} The bounded stalks $L_{s(e)}^{\lambda_e}$ at payment vertices ensure liability constraints by construction, rather than by explicit verification for each model.
\item \textbf{Categorical generality:} The same framework accommodates lattice-valued and other structured payments through the choice of liability category, with order- or metric-based fixed-point guarantees imposed separately on the set of global elements; see Section~\ref{sec:fixed-point}.
\end{itemize}
The main theorems of Section~\ref{sec:fixed-point} apply to the clearing operator $\Phi$; the sheaf structure organizes the \emph{hypotheses} under which those theorems hold.
\end{remark}

\section{Fixed Point Theorems}
\label{sec:fixed-point}

The liability-sheaf construction characterizes clearing configurations as global sections but does not by itself guarantee their existence: the constraints $p_e=\dis_e(x_{s(e)})$ and $x_v=\col_v((p_e)_{t(e)=v})$ may admit no consistent assignment, particularly when cycles propagate compatibility requirements around loops.  Four standard fixed-point principles, applied to the operator $\Phi_\ast$ on global elements, supply existence and -- in the acyclic and metric cases -- uniqueness.  In each case the work is done by the chosen hypothesis on payment objects or on the metric structure of $\mathcal G_P=\Hom_\datacat(1,P)$; the sheaf supplies the factorization $\Phi=A\circ D$ and the canonical bijection $\Gamma(\hyper_G;\sheaf{L})\cong\Fix(\Phi_\ast)$ of Proposition~\ref{prop:limit-equalizer}.

\begin{definition}[Complete Lattice Object]
\label{def:complete-lattice-object}
An object $L$ in a liability category $\datacat$ is a \style{complete lattice object} if $\Hom(1,L)$ forms a complete lattice under the partial order from Definition~\ref{def:liability-category}.  In $\mathbf{Pos}$, complete lattice objects are precisely the complete lattices; in $\mathbf{DCPO}$, they are the dcpos whose underlying poset is a complete lattice; in $\mathbf{Set}_{\mathrm{disc}}$, only singletons qualify.
\end{definition}

\begin{theorem}[Existence via Tarski]
\label{thm:lattice-existence}
Let $\sheaf{L}$ be a liability sheaf on $\hyper_G$ valued in a liability category $\datacat$ in which each payment object $\payob_v$ is a complete lattice object.  Then $\Gamma(\hyper_G;\sheaf{L})$ is non-empty and forms a complete lattice under the pointwise order $\sigma\leqslant\tau\Leftrightarrow\sigma_v\leqslant\tau_v$ for all $v\in V$.
\end{theorem}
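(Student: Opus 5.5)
The plan is to reduce the statement to Tarski's fixed point theorem applied to the induced map $\Phi_\ast$ on global elements of the total payment object $P=\prod_{v\in V}\payob_v$. By Proposition~\ref{prop:limit-equalizer} and Remark~\ref{rem:three-formulations}, applying $\Hom_\datacat(1,-)$ to $H^0(\hyper_G;\sheaf{L})\cong\Eq(\id_P,\Phi)$ gives a canonical bijection
\[
\Gamma(\hyper_G;\sheaf{L})\ \cong\ \Fix(\Phi_\ast),\qquad \Phi_\ast:\Hom(1,P)\to\Hom(1,P).
\]
This holds because $\Hom(1,-)$ carries equalizers to equalizers of sets. Under this bijection a section $\sigma$ corresponds to its institution-vertex data $(\sigma_v)_{v\in V}$, by Theorem~\ref{thm:clearing-sections}, since the edge data $p_e=\dis_e(x_{s(e)})$ are determined. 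So the pointwise order on sections in the statement is exactly the order on $\Fix(\Phi_\ast)$ inherited from $\Hom(1,P)$, once that order is identified with the componentwise order.

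\textbf{Step one: $\Hom(1,P)$ is a complete lattice.} By axiom~(5) (product-order compatibility), $\Hom(1,P)\cong\prod_{v}\Hom(1,\payob_v)$ as posets, with the componentwise order. Each factor is a complete lattice by hypothesis, and a finite product of complete lattices under the componentwise order is a complete lattice, with suprema and infima computed coordinatewise. This also confirms that the order on $\Hom(1,P)$ is the pointwise order in the statement.

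\textbf{Step two: $\Phi_\ast$ is monotone.} This is immediate from axiom~(2), since $\Phi=A\circ D$ is a morphism of $\datacat$ and so $\Phi_\ast=A_\ast\circ D_\ast$ is order-preserving. No monotonicity of individual distributors or aggregators needs to be checked separately. The constraint stalks $B$ need not be complete lattices, because we work only on the institution side.

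\textbf{Step three: apply Tarski.} Tarski's theorem now says $\Fix(\Phi_\ast)$ is a nonempty complete lattice under the restricted order, and it is nonempty because it contains $\bigvee\{x: x\leqslant\Phi_\ast x\}$. Transporting along the bijection gives the claim for $\Gamma(\hyper_G;\sheaf{L})$.

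The main subtlety is bookkeeping rather than an obstacle. First, the bijection $\Gamma\cong\Fix(\Phi_\ast)$ must be order-compatible with the pointwise order on $v$-components, which we settle by the explicit identification through axiom~(5). Second, the complete-lattice structure on $\Fix(\Phi_\ast)$ is generally not a sublattice of $\Hom(1,P)$: joins of fixed points are not pointwise joins. The statement only asserts a complete lattice under the pointwise partial order, which is exactly what Tarski provides, and I will remark on this explicitly.
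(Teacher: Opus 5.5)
Your proposal is correct and follows the paper's proof essentially step for step: product-order compatibility makes $\Hom(1,P)$ a complete lattice, axiom~(2) makes $\Phi_\ast$ monotone, Tarski gives a nonempty complete lattice $\Fix(\Phi_\ast)$, and Proposition~\ref{prop:limit-equalizer} transports this to $\Gamma(\hyper_G;\sheaf{L})$, with the same caveat that the lattice operations need not be coordinatewise. The only difference is cosmetic: you apply axiom~(2) to $\Phi=A\circ D$ as a single morphism, whereas the paper applies it to each $\dis_e$ and $\col_v$ and also notes that monotonicity of $\dis_e$ makes the bijection an order isomorphism even when edge components are included.
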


{\em Proof:} By product-order compatibility (axiom~(5) of Definition~\ref{def:liability-category}), $\mathcal G_P=\Hom(1,P)\cong\prod_v\Hom(1,\payob_v)$ is a complete lattice under componentwise order.  Order-preservation of each $\dis_e$ and $\col_v$ on global elements (axiom~(2)) makes $\Phi_\ast(\mathbf{x})_v=\col_v((\dis_e(x_{s(e)}))_{t(e)=v})$ monotone, and Tarski's theorem yields a non-empty complete lattice $\Fix(\Phi_\ast)$.  Proposition~\ref{prop:limit-equalizer} identifies $\Gamma(\hyper_G;\sheaf{L})$ with $\Fix(\Phi_\ast)$ as a set; the bijection sends $\mathbf{x}\mapsto(\sigma_v=x_v,\,\sigma_{e^*}=\dis_e(x_{s(e)}))$ and is an order isomorphism for the pointwise order on sections, since each $\dis_e$ is monotone.  The lattice operations on $\Gamma(\hyper_G;\sheaf{L})$ are the Tarski lattice operations transported from $\Fix(\Phi_\ast)$; they need not be computed by coordinatewise joins and meets in the ambient product. \qed

\begin{remark}[Completeness Cannot Be Dropped]
\label{rem:completeness-needed}
A simple multiplicative cycle of currency conversions with round-trip product strictly greater than one yields a monotone clearing operator on the non-complete lattice $[1,\infty)^n$ with no fixed point there, while its completion $[1,\infty]^n$ admits only the vacuous solution $(\infty,\ldots,\infty)$.  Thus amplification around a cycle can obstruct the existence of an economically meaningful clearing section, and completeness is the order-theoretic hypothesis that restores formal existence.
\end{remark}

\begin{remark}[Convergent Iteration]
\label{rem:iteration}
Tarski's theorem is non-constructive: extremal fixed points are presented as infima and suprema over pre- and post-fixed sets.  When each $\payob_v$ has finite height, the iterates $\Phi_\ast^n(\bot)$ stabilize in at most $\sum_v h(\payob_v)$ steps.  More generally, if the product lattice satisfies the ascending chain condition, the ascending iteration stabilizes, though without a uniform finite-height bound.  Without such a height hypothesis, convergence of $\Phi_\ast^n(\bot)$ to the least clearing section requires Scott continuity, addressed in Theorem~\ref{thm:kleene} below.
\end{remark}

\subsection{Convergent Iteration via Scott Continuity}
\label{ssec:kleene}

In the Scott-continuous regime -- naturally captured by liability sheaves valued in $\mathbf{DCPO}$ -- Kleene iteration converges to the extremal clearing sections.

\begin{theorem}[Kleene Iteration]
\label{thm:kleene}
Let $\sheaf{L}$ be a liability sheaf on $\hyper_G$ valued in $\mathbf{DCPO}$, with each payment object $\payob_v$ a complete-lattice object, and let $\bot,\top\in\mathcal G_P$ be the bottom and top of the product lattice.  Let $\mathbf{x}_{\min},\mathbf{x}_{\max}\in\mathcal G_P$ denote the institution-coordinate projections of the least and greatest clearing sections (Theorem~\ref{thm:lattice-existence}).  Then $\Phi_\ast$ is Scott-continuous and
\[
\mathbf{x}_{\min}\ =\ \sup_{n\ge 0}\Phi_\ast^n(\bot).
\]
If, in addition, $\Phi_\ast$ preserves filtered infima -- as it does whenever each $\payob_v$ has finite height or $\mathcal G_P$ is a finite product of compact real intervals on which $\dis_e,\col_v$ are continuous in the Euclidean topology -- then
\[
\mathbf{x}_{\max}\ =\ \inf_{n\ge 0}\Phi_\ast^n(\top).
\]
\end{theorem}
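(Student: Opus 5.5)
The proof has three parts: Scott continuity of $\Phi_\ast$ as a composite, a Kleene argument for the least fixed point, and the dual argument for the greatest fixed point.

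\emph{Scott continuity.} In $\mathbf{DCPO}$ every morphism is Scott-continuous. By Lemma~\ref{lem:liability-categories}, finite products are componentwise dcpos, and $\Hom(1,P)=P$. So each projection $P\to\payob_{s(e)}$ is Scott-continuous, and so is each $\dis_e$ as a morphism $\payob_{s(e)}\to\payob_{s(e)}^{\lambda_e}$. The inclusion of the Scott-closed ideal $\downarrow\!\ell_e$ is also Scott-continuous, since directed sups in a Scott-closed subset agree with those computed in the ambient dcpo. Tupling Scott-continuous maps into a product is Scott-continuous, so $D:P\to B$ is. Each $\col_v$ is a $\mathbf{DCPO}$ morphism, hence so is $A$. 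Therefore $\Phi=A\circ D$ is Scott-continuous, and $\Phi_\ast$ is the same map on underlying sets.

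\emph{Least fixed point.} This is the standard Kleene argument on the complete lattice $\mathcal G_P$; its completeness comes from axiom~(5) and the hypothesis that each $\payob_v$ is a complete-lattice object. First, $\bot\le\Phi_\ast(\bot)$, and monotonicity gives an increasing chain $\Phi_\ast^n(\bot)$. This chain is directed, so $s=\sup_n\Phi_\ast^n(\bot)$ exists, and Scott continuity gives $\Phi_\ast(s)=\sup_n\Phi_\ast^{n+1}(\bot)=s$. If $y$ is any fixed point, induction from $\bot\le y$ gives $\Phi_\ast^n(\bot)\le y$ for all $n$, so $s\le y$. Hence $s$ is the least element of $\Fix(\Phi_\ast)$. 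The bijection of Proposition~\ref{prop:limit-equalizer} is an order isomorphism on the institution coordinates (proof of Theorem~\ref{thm:lattice-existence}), so the least clearing section projects to $s=\mathbf{x}_{\min}$.

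\emph{Greatest fixed point.} The argument is the order dual. The chain $\Phi_\ast^n(\top)$ is decreasing, hence filtered, and preservation of filtered infima gives that $\inf_n\Phi_\ast^n(\top)$ is a fixed point lying above every fixed point. It remains to check the two sufficient conditions.
\begin{itemize}
\item \emph{Finite height.} The product lattice $\mathcal G_P$ has finite height, so every filtered set contains its infimum. Monotonicity then gives preservation of filtered infima.
\item \emph{Compact real intervals.} A filtered set in a finite product of compact intervals converges, viewed as a net, to its infimum in the Euclidean topology, by monotone convergence coordinatewise. Continuity of $\dis_e$ and $\col_v$, composed through finitely many sums, projections and compositions, then makes $\Phi_\ast$ map this net to a net converging to $\Phi_\ast(\inf)$. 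Monotonicity makes the image net filtered with limit equal to its infimum.
\end{itemize}

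I expect the main obstacle to be the Euclidean case. One has to state precisely what continuity of $\col_v$ on $\prod[0,\ell_e]$ and of $\dis_e$ means inside a $\mathbf{DCPO}$-valued sheaf. The most delicate point is that $\mathcal G_P$ must be a product of compact intervals; this may require first restricting to an invariant compact sublattice, as in the Eisenberg--Noe passage from $[0,\infty]$ to $[0,\bar\ell_v]$.
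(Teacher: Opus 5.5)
Your proof is correct and follows the paper's own argument. It establishes Scott continuity of $\Phi=A\circ D$ as a composite of $\mathbf{DCPO}$ morphisms, uses the Kleene chain from $\bot$ for $\mathbf{x}_{\min}$ and the dual filtered-infimum argument from $\top$ for $\mathbf{x}_{\max}$, and checks the two sufficient conditions the same way (finite height for the first, coordinatewise monotone convergence plus Euclidean continuity for the second). The concern in your final paragraph does not arise: the theorem assumes outright that $\mathcal G_P$ is a finite product of compact intervals, so you need no preliminary restriction to an invariant sublattice.
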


{\em Proof:} Each $\dis_e$ and $\col_v$ is Scott-continuous by hypothesis, so $\Phi=A\circ D$ and its action $\Phi_\ast$ on global elements are Scott-continuous.  The Kleene fixed-point theorem identifies $\sup_n\Phi_\ast^n(\bot)$ with the least pre-fixed point above $\bot$, which by Tarski is $\mathbf{x}_{\min}$.  Under filtered-infimum preservation, $y=\inf_n\Phi_\ast^n(\top)$ satisfies $\Phi_\ast(y)=\inf_n\Phi_\ast^{n+1}(\top)=y$, and monotonicity gives $z\leqslant\Phi_\ast^n(\top)$ for any other fixed point $z$, hence $z\leqslant y$; thus $y=\mathbf{x}_{\max}$.  Verification of the two sufficient conditions for filtered-infimum preservation is standard: in the finite-height case every descending chain stabilizes; in the compact-interval case monotone descending nets converge coordinatewise to their infima, and continuity transports this through $\Phi_\ast$. \qed

\begin{example}[Eisenberg--Noe redux]
\label{ex:EN-kleene}
The classical Eisenberg--Noe operator on $\prod_v[0,\bar\ell_v]$ is monotone and piecewise linear, hence preserves directed suprema and filtered infima.  Theorem~\ref{thm:kleene} therefore yields convergence in both directions: from $\bot=\mathbf{0}$ to the least clearing vector and from $\top=(\bar\ell_v)$ to the greatest -- the same extremal selected by the fictitious-default algorithm of \cite{EisenbergNoe2001}.  The two procedures are not identical: Kleene iteration is a general monotone fixed-point iteration, while fictitious-default uses the piecewise-linear structure of this particular operator to identify the default set in $|V|$ combinatorial rounds and accelerate the descending computation.
\end{example}

\subsection{Acyclic Networks}
\label{ssec:dag}

When the underlying graph is acyclic, the structural difficulty addressed by Tarski, Kleene, and Banach -- cyclic propagation of compatibility constraints -- disappears, and a unique clearing section is computed by forward propagation in any liability category.

\begin{theorem}[Clearing on Acyclic Networks]
\label{thm:dag-clearing}
Let $\sheaf{L}$ be a liability sheaf on $\hyper_G$ valued in a liability category $\datacat$, with $G$ acyclic, and let $r$ denote the maximum length, in edges, of a directed path in $G$.  Then $\Phi_\ast^{r+1}$ is constant on $\mathcal G_P$, with image a single global element $\mathbf{x}^\ast\in\Fix(\Phi_\ast)$; equivalently, $\hyper_G$ admits a unique clearing section, computed by $r+1$ applications of $\Phi_\ast$ to any starting point.
\end{theorem}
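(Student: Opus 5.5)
We argue by induction on a depth (height) function on the vertices of $G$. For each $v\in V$ let $d(v)$ be the maximum length, in edges, of a directed path in $G$ ending at $v$. This is finite because $G$ is finite and acyclic, and $0\leqslant d(v)\leqslant r$. If $t(e)=v$, then $d(s(e))<d(v)$, since any path ending at $s(e)$ extends by $e$ to a path ending at $v$. The key structural fact, read off from the factorization $\Phi=A\circ D$ of Proposition~\ref{prop:limit-equalizer}, is that on global elements
\[
\Phi_\ast(\mathbf{x})_v=\col_v\bigl((\dis_e(x_{s(e)}))_{t(e)=v}\bigr),
\]
so the $v$-coordinate of $\Phi_\ast(\mathbf{x})$ depends only on the coordinates $x_u$ with $u=s(e)$ for some $e$ with $t(e)=v$. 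Here we use axiom~(5) of Definition~\ref{def:liability-category} (only its bijectivity part) to identify $\mathcal G_P$ with $\prod_v\Hom(1,\payob_v)$. We also use that composites of global elements with $\col_v,\dis_e$ and projections compute coordinates, which is just functoriality of composition. No order or metric hypothesis enters.

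The inductive claim is: for every $k\geqslant 1$ and every $v$ with $d(v)\leqslant k-1$, the coordinate $\Phi_\ast^k(\mathbf{x})_v$ is independent of $\mathbf{x}\in\mathcal G_P$.

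\emph{Base case.} For $k=1$, a vertex with $d(v)=0$ has $t^{-1}(v)=\emptyset$. By the empty-product convention (Remark~\ref{rem:exogenous}), $\Phi_\ast(\mathbf{x})_v=\col_v(\star)$ is constant.

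\emph{Inductive step.} Suppose the claim holds for $k$, and let $d(v)\leqslant k$. Write $\Phi_\ast^{k+1}(\mathbf{x})_v=\col_v\bigl((\dis_e(\Phi_\ast^k(\mathbf{x})_{s(e)}))_{t(e)=v}\bigr)$. Each $s(e)$ here satisfies $d(s(e))\leqslant k-1$, so by hypothesis each $\Phi_\ast^k(\mathbf{x})_{s(e)}$ is constant, and hence so is the $v$-coordinate. Taking $k=r+1$ covers every vertex, so $\Phi_\ast^{r+1}$ is constant with some value $\mathbf{x}^\ast$.

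We then deduce the remaining assertions. Since $\Phi_\ast^{r+1}$ is constant, $\Phi_\ast(\mathbf{x}^\ast)=\Phi_\ast^{r+1}(\Phi_\ast(\mathbf{x}))=\mathbf{x}^\ast$ for any $\mathbf{x}$, so $\mathbf{x}^\ast$ is a fixed point. It is the only one: if $\mathbf{y}=\Phi_\ast(\mathbf{y})$, then $\mathbf{y}=\Phi_\ast^{r+1}(\mathbf{y})=\mathbf{x}^\ast$. Finally, Proposition~\ref{prop:limit-equalizer}, via $\Gamma(\hyper_G;\sheaf{L})\cong\Fix(\Phi_\ast)$ (Remark~\ref{rem:three-formulations}), translates this into existence and uniqueness of the clearing section. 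The only points needing care are the bookkeeping that coordinates of $\Phi_\ast$ are computed by the local formula in an arbitrary liability category, and the edge case $V=\emptyset$ or $E=\emptyset$ (there $r=0$ and one application suffices). I expect neither to be a real obstacle; the main step is the induction on $d$.
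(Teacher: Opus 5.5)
Your proposal is correct and is essentially the paper's proof: you use the same longest-path depth function, the same observation that $[\Phi_\ast(\mathbf{x})]_v$ depends only on coordinates of strictly smaller depth (with the empty-product convention handling depth $0$), the same deduction of the fixed point and its uniqueness from constancy of $\Phi_\ast^{r+1}$, and the same transport to clearing sections via Proposition~\ref{prop:limit-equalizer}. The only difference is bookkeeping: you induct on the iteration count $k$ while the paper inducts on the level, but both arguments establish constancy for exactly the pairs $(k,v)$ with $k\geqslant d(v)+1$.
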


{\em Proof:} Define the \style{level} of $v\in V$ as the maximum length of a directed path ending at $v$, so that $r=\max_v\mathrm{level}(v)$.  The expression $[\Phi_\ast(\mathbf{x})]_v=\col_v((\dis_e(x_{s(e)}))_{t(e)=v})$ depends only on coordinates $x_u$ with $\mathrm{level}(u)<\mathrm{level}(v)$.

By induction on $\ell=\mathrm{level}(v)$, the coordinate $[\Phi_\ast^k(\mathbf{x})]_v$ is independent of $\mathbf{x}$ whenever $k\ge\ell+1$.  At $\ell=0$, $t^{-1}(v)\cap E=\emptyset$ and $[\Phi_\ast(\mathbf{x})]_v=\col_v(\star)$ is constant.  For $\ell\ge 1$ and $k\ge\ell+1$, the value depends on $\{[\Phi_\ast^{k-1}(\mathbf{x})]_u:u\to v\}$, each of which has level $\leqslant\ell-1\leqslant k-2$ and is constant by induction.

Taking $k=r+1$, every coordinate of $\Phi_\ast^{r+1}(\mathbf{x})$ is independent of $\mathbf{x}$; call the resulting element $\mathbf{x}^\ast$.  One further iteration leaves $\mathbf{x}^\ast$ unchanged, so $\mathbf{x}^\ast\in\Fix(\Phi_\ast)$; and any $\mathbf{y}\in\Fix(\Phi_\ast)$ equals $\Phi_\ast^{r+1}(\mathbf{y})=\mathbf{x}^\ast$.  Proposition~\ref{prop:limit-equalizer} then gives $\Gamma(\hyper_G;\sheaf{L})=\{\mathbf{x}^\ast\}$. \qed

\begin{remark}[No Lattice or Metric Required]
\label{rem:dag-no-structure}
Theorem~\ref{thm:dag-clearing} uses neither order nor metric on payment objects, only graph topology and the empty-product aggregator convention.  It applies to liability sheaves valued in any liability category -- including $\mathbf{Set}_{\mathrm{disc}}$, where the clearing section is the unique forward propagation of payments through the acyclic network, a deterministic computation rather than a fixed-point search.  Existence, uniqueness, and contagion behavior are concentrated in the cycle structure of $G$, not in the algebraic structure of payment objects.
\end{remark}

\subsection{Uniqueness via Contraction on Global Elements}
\label{ssec:banach}

A complementary route to uniqueness imposes a complete metric on $\mathcal G_P$ rather than order structure on $\datacat$.  This cleanly separates the categorical and analytic layers: the sheaf lives in $\datacat$ (possibly order-trivial), and contraction is required only on global elements.

\begin{theorem}[Uniqueness via Banach]
\label{thm:banach-global}
Let $\sheaf{L}$ be a liability sheaf on $\hyper_G$ valued in a liability category $\datacat$, and suppose $\mathcal G_P$ carries a complete metric $d$ under which $\Phi_\ast$ is $k$-Lipschitz with $k<1$.  Then $\Fix(\Phi_\ast)=\{\mathbf{x}^\ast\}$, with $\mathbf{x}^\ast=\lim_n\Phi_\ast^n(\mathbf{x}_0)$ for every $\mathbf{x}_0\in\mathcal G_P$, and $\Gamma(\hyper_G;\sheaf{L})$ consists of a single clearing section.
\end{theorem}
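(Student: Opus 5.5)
The plan is to apply Banach's contraction mapping theorem directly to the self-map $\Phi_\ast:\mathcal G_P\to\mathcal G_P$, and then transport the conclusion to sections through Proposition~\ref{prop:limit-equalizer}.

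First, I will check that $\Phi_\ast$ is a genuine self-map of $\mathcal G_P=\Hom_\datacat(1,P)$. It is given by post-composition with $\Phi=A\circ D:P\to P$, so $\mathbf{x}\mapsto\Phi\circ\mathbf{x}$. No order or metric structure on $\datacat$ enters here; only the hypothesis on $(\mathcal G_P,d)$ is used.

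Next comes Banach's theorem in its standard form. Fix any $\mathbf{x}_0$ and set $\mathbf{x}_n=\Phi_\ast^n(\mathbf{x}_0)$. The $k$-Lipschitz bound gives $d(\mathbf{x}_{n+1},\mathbf{x}_n)\leqslant k^n d(\mathbf{x}_1,\mathbf{x}_0)$. The geometric series then shows $(\mathbf{x}_n)$ is Cauchy, so completeness yields a limit $\mathbf{x}^\ast$.

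Lipschitz maps are continuous, so $\Phi_\ast(\mathbf{x}^\ast)=\lim\Phi_\ast(\mathbf{x}_n)=\lim\mathbf{x}_{n+1}=\mathbf{x}^\ast$. For uniqueness, suppose $\mathbf{y}$ is another fixed point. Then $d(\mathbf{x}^\ast,\mathbf{y})\leqslant k\,d(\mathbf{x}^\ast,\mathbf{y})$ with $k<1$, which forces $\mathbf{y}=\mathbf{x}^\ast$. Uniqueness also shows the limit is independent of $\mathbf{x}_0$.

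Finally, I will identify fixed points with sections. By Proposition~\ref{prop:limit-equalizer}, $H^0(\hyper_G;\sheaf{L})\cong\Eq(\id_P,\Phi)$. By the universal property of the equalizer, global elements $1\to\Eq(\id_P,\Phi)$ correspond bijectively to $\mathbf{x}:1\to P$ with $\Phi\circ\mathbf{x}=\mathbf{x}$, that is, to $\Fix(\Phi_\ast)$. Hence $\Gamma(\hyper_G;\sheaf{L})\cong\Fix(\Phi_\ast)=\{\mathbf{x}^\ast\}$ is a singleton. Theorem~\ref{thm:clearing-sections} then recovers the edge data as $p_e=\dis_e(x^\ast_{s(e)})$.

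There is no real obstacle here. The only point needing care is this last transfer step: the metric lives on $\mathcal G_P$ rather than on objects of $\datacat$, so uniqueness must be argued at the level of global elements and then pulled back through the equalizer bijection. The argument should not be phrased as uniqueness of a subobject of $P$.
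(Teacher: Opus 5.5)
Your proposal is correct and matches the paper's proof: apply Banach's theorem to $(\mathcal G_P,d,\Phi_\ast)$, then transport the unique fixed point to $\Gamma(\hyper_G;\sheaf{L})$ via the equalizer identification of Proposition~\ref{prop:limit-equalizer}; you simply write out the standard contraction argument that the paper cites. One small implicit step: ``fix any $\mathbf{x}_0$'' assumes $\mathcal G_P\neq\emptyset$, which holds because the exogenous resource morphisms $\iota_v$ assemble, by the universal property of the product, into a global element of $P$.
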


{\em Proof:} Apply the Banach fixed point theorem to $(\mathcal G_P,d,\Phi_\ast)$ and pull back through Proposition~\ref{prop:limit-equalizer}. \qed

\begin{remark}[The $\ell^1$ Sufficient Condition]
\label{rem:ell-one-contraction}
If each $\payob_v$ carries a complete metric $d_v$ on its global elements with $\mathcal G_P$ equipped with $d(\mathbf{x},\mathbf{y})=\sum_v d_v(x_v,y_v)$, and each $\dis_e$ is $k_e$-Lipschitz and each $\col_v$ has componentwise Lipschitz constants $L_{v,e}$, the standard $\ell^1$ estimate
\[
d(\Phi_\ast(\mathbf{x}),\Phi_\ast(\mathbf{y}))\ \leqslant\ \bigl(\textstyle\max_w\sum_{s(e)=w}L_{t(e),e}k_e\bigr)\,d(\mathbf{x},\mathbf{y})
\]
gives contraction whenever the bracketed quantity is below $1$.  Stochastic clearing via Wasserstein contraction on probability-valued global elements fits this template and awaits future development.  In default-cost models such as Rogers--Veraart \cite{RogersVeraart2013}, recovery factors below one suggest natural contraction, but the standard clearing map is piecewise-defined across the solvent/default boundary and generally fails to be globally Lipschitz; attenuated variants that regularize this transition recover contraction.  The classical Eisenberg--Noe model is the non-expansive boundary case, where Tarski rather than Banach is the appropriate tool.
\end{remark}

\section{Outlook}
\label{sec:outlook}

This paper has done one mathematical job: it has identified liability clearing as a finite-limit construction in a liability category.  Clearing configurations are global sections of a liability sheaf on a directed hypergraph; the global-section object is the equalizer of the identity and a factored clearing operator $\Phi=A\circ D$ (Proposition~\ref{prop:limit-equalizer}); and this construction is functorial under change of coefficient category (Theorem~\ref{thm:clearing-invariance}).  Tarski's theorem is the order-theoretic corollary; Banach's theorem on global elements is the analytic one.  The Eisenberg--Noe model and lattice liability networks fall out as special cases.

The structural payoff is the same in every instance: clearing -- a seemingly opaque fixed-point problem -- decomposes canonically into local distribution and aggregation steps governed by the hypergraph topology, and this decomposition transports coherently across semantic frameworks.  That the same categorical skeleton subsumes deterministic and order-theoretic models suggests ``clearing'' is not fundamentally about money but about global consistency of local operations, a principle applicable to resource allocation, network equilibria, and coordination problems more broadly.

The framework invites extension along several directions, sketched below. 

\style{Changes of base hypergraph.}  Where Theorem~\ref{thm:clearing-invariance} addresses change of coefficient category over a fixed base, many of the most consequential operations on liability systems -- central-counterparty interposition, multilateral netting, subnetwork restriction, the gluing of overlapping clearing systems, and vertex merging more generally -- change the hypergraph itself.  Such operations are not expected to be governed by a single isomorphism theorem but by comparison maps between the clearing-section lattices of related hypergraphs, with the precise categorical form (functor, span, adjunction, or profunctor) depending on the operation.  Aggregation of obligations along an edge requires more structure on the liability category than the finite-limit axioms developed here, and restriction and gluing rely on a notion of open or boundaried hypergraph that the present work does not formalize.  These constructions are taken up in companion projects.

\style{Probabilistic and metric clearing.}  Equipping the set of global elements of each payment object with a metric -- the Wasserstein metric on probability-valued global elements is the leading case -- and replacing aggregation by an operation contractive in that metric yields uniqueness and convergent iteration via Banach's theorem on global elements; the $\ell^{1}$ sufficient condition of Remark~\ref{rem:ell-one-contraction} already specializes cleanly.  Stochastic clearing, risk pooling, and capital buffers fit this template, as do non-financial settings in which contraction reflects damping or attenuation rather than risk.  Higher integrability and rates of convergence, statistical inference on partially observed networks, and the interplay between metric contraction and order-theoretic existence are natural sequels.

\style{Cohomological obstructions.}  We anticipate that relative $H^{0}$ and $H^{1}$ of suitable inclusion or boundary maps will function as obstruction classes for \style{partial clearing} -- detecting and measuring not just whether clearing fails, but how it fails, where the failure localizes, and whether locally cleared subsystems extend coherently to global ones; hence the use of cohomological notation $H^{0}(\hyper;\sheaf{L})$ throughout.  The appropriate coefficient setting -- linearized, abelianized, or genuinely nonabelian -- depends on which failures one wishes to detect, and the development of higher cohomology, together with the long exact sequences such a theory would generate, is left to future work.

\style{Further fixed-point principles and dynamics.}  The Tarski/Banach/Kleene triple captures the principal order- and metric-based fixed-point principles developed here. Topological, convex, and multivalued settings invite the corresponding fixed-point theorems of Schauder, Kakutani, and their descendants, all compatible with the liability-category axioms once appropriate hypotheses on $\datacat$ are imposed.  Dynamic and multi-period clearing, in which the base hypergraph or the coefficient category itself varies over time, sits naturally in this family, as do continuous-time formulations, game-theoretic equilibria of strategically defaulting agents, and the algorithmic and complexity-theoretic questions raised by computing clearing sections in large networks.

\bibliographystyle{amsplain}
\bibliography{CLEARING}

@book{MacLane1998,
  author    = {Mac Lane, Saunders},
  title     = {{Categories for the Working Mathematician}},
  edition   = {2},
  series    = {Graduate Texts in Mathematics},
  volume    = {5},
  publisher = {Springer},
  year      = {1998},
  doi       = {10.1007/978-1-4757-4721-8}
}

@book{Leinster2014,
  author    = {Leinster, Tom},
  title     = {{Basic Category Theory}},
  series    = {Cambridge Studies in Advanced Mathematics},
  volume    = {143},
  publisher = {Cambridge University Press},
  year      = {2014},
  doi       = {10.1017/CBO9781107360068}
}

@article{Tarski1955,
  author  = {Tarski, Alfred},
  title   = {A Lattice-Theoretical Fixpoint Theorem and Its Applications},
  journal = {Pacific Journal of Mathematics},
  volume  = {5},
  number  = {2},
  year    = {1955},
  pages   = {285--309},
  doi     = {10.2140/pjm.1955.5.285}
}

@book{DaveyPriestley2002,
  author    = {Davey, B. A. and Priestley, H. A.},
  title     = {Introduction to Lattices and Order},
  edition   = {2},
  publisher = {Cambridge University Press},
  year      = {2002},
  doi       = {10.1017/CBO9780511809088}
}

@book{Bretto2013,
  author    = {Bretto, Alain},
  title     = {{Hypergraph Theory: an Introduction}},
  publisher = {Springer},
  series    = {Mathematical Engineering},
  year      = {2013},
  doi       = {10.1007/978-3-319-00080-0}
}

@phdthesis{Curry2014,
  author  = {Curry, Justin},
  title   = {{Sheaves, Cosheaves and Applications}},
  school  = {University of Pennsylvania},
  year    = {2014},
  note    = {arXiv:1303.3255}
}

@article{HansenGhrist2019,
  author  = {Hansen, Jakob and Ghrist, Robert},
  title   = {Toward a Spectral Theory of Cellular Sheaves},
  journal = {Journal of Applied and Computational Topology},
  volume  = {3},
  year    = {2019},
  pages   = {315--358},
  doi     = {10.1007/s41468-019-00038-7}
}

@article{HansenGhrist2021,
  author  = {Hansen, Jakob and Ghrist, Robert},
  title   = {Opinion Dynamics on Discourse Sheaves},
  journal = {SIAM Journal on Applied Mathematics},
  year    = {2021},
  volume  = {81},
  number  = {5},
  pages   = {2033--2060},
  doi     = {10.1137/20M1341088}
}

@misc{SumrayHarringtonNanda2024,
  author       = {Sumray, Otto and Harrington, Heather A. and Nanda, Vidit},
  title        = {Quiver {Laplacians} and Feature Selection},
  year         = {2024},
  eprint       = {2404.06993},
  archivePrefix = {arXiv},
  primaryClass = {math.AT},
  note         = {arXiv:2404.06993}
}

@inproceedings{DutaCassaraSilvestriLio2023,
  author    = {Duta, Iulia and Cassar\`{a}, Giulia and Silvestri, Fabrizio and Li\`{o}, Pietro},
  title     = {Sheaf Hypergraph Networks},
  booktitle = {Advances in Neural Information Processing Systems},
  volume    = {36},
  pages     = {12087--12099},
  year      = {2023},
  publisher = {Curran Associates, Inc.},
  note      = {arXiv:2309.17116}
}

@inproceedings{DirectionalSHN2025,
  author       = {Mule, Emanuele and Fiorini, Stefano and Purificato, Antonio and Siciliano, Federico and Coniglio, Stefano and Silvestri, Fabrizio},
  title        = {Directional Sheaf Hypergraph Networks: Unifying Learning on Directed and Undirected Hypergraphs},
  booktitle    = {International Conference on Learning Representations},
  year         = {2026},
  note         = {ICLR 2026; arXiv:2510.04727},
  doi          = {10.48550/arXiv.2510.04727}
}

@article{EisenbergNoe2001,
  author  = {Eisenberg, Larry and Noe, Thomas H.},
  title   = {Systemic Risk in Financial Systems},
  journal = {Management Science},
  year    = {2001},
  volume  = {47},
  number  = {2},
  pages   = {236--249},
  doi     = {10.1287/mnsc.47.2.236.9835}
}

@misc{GhristGouldLopezRiess2025,
  author       = {Ghrist, Robert and Gould, Julian and Lopez, Miguel and Riess, Hans},
  title        = {Clearing Sections of Lattice Liability Networks},
  year         = {2025},
  eprint       = {2503.17836},
  archivePrefix = {arXiv},
  primaryClass = {q-fin.MF},
  note         = {arXiv:2503.17836}
}

@article{RogersVeraart2013,
  author  = {Rogers, L. C. G. and Veraart, Luitgard A. M.},
  title   = {Failure and Rescue in an Interbank Network},
  journal = {Management Science},
  year    = {2013},
  volume  = {59},
  number  = {4},
  pages   = {882--898},
  doi     = {10.1287/mnsc.1120.1569}
}

@techreport{Elsinger2009,
  author      = {Elsinger, Helmut},
  title       = {Financial Networks, Cross Holdings, and Limited Liability},
  institution = {Oesterreichische Nationalbank},
  type        = {Working Paper},
  number      = {156},
  year        = {2009}
}

@article{Feinstein2019,
  author  = {Feinstein, Zachary},
  title   = {Obligations with Physical Delivery in a Multilayered Financial Network},
  journal = {SIAM Journal on Financial Mathematics},
  year    = {2019},
  volume  = {10},
  number  = {4},
  pages   = {877--906},
  doi     = {10.1137/18M1194729}
}

@article{KusnetsovVeraart2019,
  author  = {Kusnetsov, Michael and Veraart, Luitgard A. M.},
  title   = {Interbank Clearing in Financial Networks with Multiple Maturities},
  journal = {SIAM Journal on Financial Mathematics},
  year    = {2019},
  volume  = {10},
  number  = {1},
  pages   = {37--67},
  doi     = {10.1137/18M1180542}
}

@article{BanerjeeBernsteinFeinstein2018,
  author  = {Banerjee, Tathagata and Bernstein, Alex and Feinstein, Zachary},
  title   = {Dynamic Clearing and Contagion in Financial Networks},
  journal = {European Journal of Operational Research},
  year    = {2025},
  volume  = {321},
  number  = {2},
  pages   = {664--675},
  doi     = {10.1016/j.ejor.2024.09.046},
  note    = {arXiv:1801.02091}
}

@article{BanerjeeFeinstein2019,
  author  = {Banerjee, Tathagata and Feinstein, Zachary},
  title   = {Impact of Contingent Payments on Systemic Risk in Financial Networks},
  journal = {Mathematics and Financial Economics},
  year    = {2019},
  volume  = {13},
  number  = {4},
  pages   = {617--636},
  doi     = {10.1007/s11579-019-00239-9}
}

@article{BarrattBoyd2020,
  author  = {Barratt, Shane and Boyd, Stephen},
  title   = {Multi-Period Liability Clearing via Convex Optimal Control},
  journal = {Optimization and Engineering},
  year    = {2022},
  volume  = {24},
  number  = {2},
  pages   = {1387--1409},
  doi     = {10.1007/s11081-022-09737-0},
  note    = {arXiv:2005.09066}
}

@article{CalafioreEtAl2023,
  author  = {Calafiore, Giuseppe C. and Fracastoro, Giulia and Proskurnikov, Anton V.},
  title   = {Clearing Payments in Dynamic Financial Networks},
  journal = {Automatica},
  year    = {2023},
  volume  = {158},
  pages   = {111299},
  doi     = {10.1016/j.automatica.2023.111299}
}

@article{ElliottGolubJackson2014,
  author  = {Elliott, Matthew and Golub, Benjamin and Jackson, Matthew O.},
  title   = {Financial Networks and Contagion},
  journal = {American Economic Review},
  year    = {2014},
  volume  = {104},
  number  = {10},
  pages   = {3115--3153},
  doi     = {10.1257/aer.104.10.3115}
}

@article{CapponiChen2015,
  author  = {Capponi, Agostino and Chen, Peng-Chu},
  title   = {Systemic Risk Mitigation in Financial Networks},
  journal = {Journal of Economic Dynamics and Control},
  year    = {2015},
  volume  = {58},
  pages   = {152--166},
  doi     = {10.1016/j.jedc.2015.06.008}
}

@article{AcemogluOzdaglarTahbazSalehi2015,
  author  = {Acemoglu, Daron and Ozdaglar, Asuman and Tahbaz-Salehi, Alireza},
  title   = {Systemic Risk and Stability in Financial Networks},
  journal = {American Economic Review},
  year    = {2015},
  volume  = {105},
  number  = {2},
  pages   = {564--608},
  doi     = {10.1257/aer.20130456}
}

@article{GaiHaldaneKapadia2011,
  author  = {Gai, Prasanna and Haldane, Andrew and Kapadia, Sujit},
  title   = {Complexity, Concentration and Contagion},
  journal = {Journal of Monetary Economics},
  year    = {2011},
  volume  = {58},
  number  = {5},
  pages   = {453--470},
  doi     = {10.1016/j.jmoneco.2011.05.005}
}

@article{GlassermanYoung2015,
  author  = {Glasserman, Paul and Young, H. Peyton},
  title   = {How Likely Is Contagion in Financial Networks?},
  journal = {Journal of Banking \& Finance},
  year    = {2015},
  volume  = {50},
  pages   = {383--399},
  doi     = {10.1016/j.jbankfin.2014.02.006}
}

@article{AnandCraigVonPeter2015,
  author  = {Anand, Kartik and Craig, Ben and von Peter, Goetz},
  title   = {Filling in the Blanks: Network Structure and Interbank Contagion},
  journal = {Quantitative Finance},
  year    = {2015},
  volume  = {15},
  number  = {4},
  pages   = {625--636},
  doi     = {10.1080/14697688.2014.968195}
}

@article{GandyVeraart2017,
  author  = {Gandy, Axel and Veraart, Luitgard A. M.},
  title   = {A {Bayesian} Methodology for Systemic Risk Assessment in Financial Networks},
  journal = {Management Science},
  year    = {2017},
  volume  = {63},
  number  = {12},
  pages   = {4428--4446},
  doi     = {10.1287/mnsc.2016.2546}
}

@inproceedings{SchuldenzuckerSeukenBattiston2017,
  author    = {Schuldenzucker, Steffen and Seuken, Sven and Battiston, Stefano},
  title     = {Finding Clearing Payments in Financial Networks with Credit Default Swaps is {PPAD}-complete},
  booktitle = {8th Innovations in Theoretical Computer Science Conference (ITCS 2017)},
  series    = {LIPIcs},
  volume    = {67},
  pages     = {32:1--32:20},
  year      = {2017},
  publisher = {Schloss Dagstuhl},
  doi       = {10.4230/LIPIcs.ITCS.2017.32}
}

@inproceedings{BestingHoeferHuth2026,
  author    = {Besting, Leander and Hoefer, Martin and Huth, Lars},
  title     = {Computing {Tarski} Fixed Points in Financial Networks},
  booktitle = {43rd International Symposium on Theoretical Aspects of Computer Science (STACS 2026)},
  series    = {LIPIcs},
  volume    = {364},
  pages     = {14:1--14:18},
  year      = {2026},
  publisher = {Schloss Dagstuhl},
  doi       = {10.4230/LIPIcs.STACS.2026.14}
}

@article{CsokaHerings2024,
  author  = {Cs{\'o}ka, P{\'e}ter and Herings, P. Jean-Jacques},
  title   = {Uniqueness of Clearing Payment Matrices in Financial Networks},
  journal = {Mathematics of Operations Research},
  year    = {2024},
  volume  = {49},
  number  = {1},
  pages   = {232--250},
  doi     = {10.1287/moor.2023.1354}
}

@article{CalafioreFracastoroProskurnikov2024,
  author  = {Calafiore, Giuseppe C. and Fracastoro, Giulia and Proskurnikov, Anton V.},
  title   = {Optimal Clearing Payments in a Financial Contagion Model},
  journal = {SIAM Journal on Financial Mathematics},
  year    = {2024},
  volume  = {15},
  number  = {2},
  pages   = {473--502},
  doi     = {10.1137/22M150294X},
  note    = {arXiv:2103.10872}
}

@misc{AyzenbergGebhartMagaiSolomadin2025,
  author        = {Ayzenberg, Anton and Gebhart, Thomas and Magai, German and Solomadin, Grigory},
  title         = {Sheaf Theory: From Deep Geometry to Deep Learning},
  year          = {2025},
  eprint        = {2502.15476},
  archivePrefix = {arXiv},
  primaryClass  = {math.AT},
  doi           = {10.48550/arXiv.2502.15476},
  note          = {arXiv:2502.15476}
}

\end{document}